\DeclareMathOperator{\E}{\mathbb{E}}
\newtheorem{definition}{Definition}
\newtheorem{lemma}{Lemma}
\newtheorem{corollary}{Corollary}
\newtheorem{theorem}{Theorem}
\theoremstyle{remark}
\newtheorem{remark}{Remark}
\newcommand{\R}{\mathbb{R}}
\newcommand*\dd{\mathop{}\!\mathrm{d}}
\algnewcommand\algorithmicforeach{\textbf{for each}}
\journal{Journal Name}
\begin{document}

\begin{frontmatter}

\title{Expected Size of Random Tukey Layers and Convex Layers}

\author[1]{Zhengyang Guo\corref{cor1}}
\ead{GUOZ0015@e.ntu.edu.sg}

\author[1]{Yi Li}
\ead{yili@ntu.edu.sg}

\author[2]{Shaoyu Pei}
\ead{Shaoyupei@mail.fresnostate.edu}

\address[1]{School of Physical and Mathematical Sciences, Nanyang Technological University, Singapore}
\address[2]{College of Science and Mathematics,
           California State University, Fresno, California, United States}
           
\cortext[cor1]{Corresponding author}

\begin{abstract}
We study the Tukey layers and convex layers of a planar point set, which consists of $n$ points independently and uniformly sampled from a convex polygon with $k$ vertices. We show that the expected number of vertices on the first $t$ Tukey layers is $O\left(kt\log(n/k)\right)$ and the expected number of vertices on the first $t$ convex layers is $O\left(kt^{3}\log(n/(kt^2))\right)$. We also show a lower bound of $\Omega(t\log n)$ for both quantities in the special cases where $k=3,4$. The implications of those results in the average-case analysis of two computational geometry algorithms are then discussed.
\end{abstract}

\begin{keyword}
convex hull \sep convex layer \sep Tukey depth \sep Tukey layer \sep computational geometry\sep geometric probability
\end{keyword}

\end{frontmatter}


\section{Introduction}
The motivation of this work is to understand the combinatorial and geometric properties of random convex layers and Tukey layers of planar point sets $X$. The convex layers of $X$ are a sequence of nested convex polygons whose vertices form a partition of $X$. The Tukey layers are the cells of a partition of $X$, in which each cell consists of all points in $X$ of the same Tukey depth~\cite{tukey1975mathematics}. We refer the readers to Definitions~\ref{Convex Layer Structure} and~\ref{def::t-shell} for precise definitions. Each Tukey layer, as we shall prove in Lemma~\ref{lem:convex position}, is exactly the vertices of a convex polygon. 

There has been a long research history on the expected size of the convex hull of a random point set \cite{RS63,dwyer1988average,hueter1994convex,har2011expected}, the relation between the expected size and the expected area of the convex hull \cite{affentranger1991convex,dwyer1988convex}, and the expected convex depth \cite{dalal2004counting}. However, few of them study convex layers. In fact, the vertices on the first $t$ convex layers, denoted by $V_{[t]}(X)$, are closely related to the partial enclosing problem introduced by Atanassov et al.\ in \cite{atanassov2009algorithms}. The objective of this problem is to find the convex hull with the minimum area that encloses $(n-t)$ of the $n$ points in $X$. The $t$ excluded points are regarded as outliers, as in many works that study the partial covering, for example \cite{guha2019distributed}, \cite{gupta2017local} and \cite{har2004shape},.


In \cite{atanassov2009algorithms}, Atanassov et al.\ give an algorithm with the worst-case time complexity of $O\left(n\log{n}+\binom{4t}{2t}(3t)^tn\right)$, where the $n$ in the second term $\binom{4t}{2t}(3t)^tn$ refers to the size $\left|V_{[t]}(X)\right|$ in the worst case. However, the actual runtime seldom meets such worst cases. To give an overall measure on the efficiency of the algorithm, it makes more sense to study the average time complexity. Assuming that $X$ is uniformly sampled from a convex $k$-gon as in \cite{RS63,efron1965convex,dwyer1988convex,affentranger1991convex,barany1999sylvester,har2011expected,buchta2012boundary}, we shall prove in Section~\ref{section: upper bound for convex layer} that $\E\left|V_{[t]}(X)\right|=O(kt^{3}\log(n/(kt^2)))$, which is $o(n)$ when $t = o((n/(k\log(nk))^{1/3})$. As a consequence, the expected complexity of Atanassov et al.'s algorithm in \cite{atanassov2009algorithms} is $O(n\log{n}+\binom{4t}{2t}(3t)^{t}k t^3\log(n/t^2))$. This explains the gap between the worst case complexity and the actual runtime.

In addition, we study the expected number of vertices on the first $t$ Tukey layers $U_{[t]}(X)$ as defined in Definition~\ref{def::t-shell}. This is also related to a partial shape fitting problem \cite{guo2020minimum} in which the parallelogram rather than the convex polygon as in \cite{atanassov2009algorithms} is concerned. The time complexity of the algorithm in \cite{guo2020minimum} is $O(n^2t^4+n^2\log{n})$, where the $n$ in the first term $n^2t^4$ refers to $\left|U_{[t]}(X)\right|$ in the worst case. As we shall prove $\E\left|U_{[t]}(X)\right|=O(kt\log(n/k))$ in Section~\ref{section: upper bound for Tukey layer}, the expected time complexity is then $O\left(kt^5n\log(n/k)+n^2\log{n}\right)$, smaller than the worst-case complexity when $\Omega\left((n/k)^{1/5}\right)\leq t\leq O(n/(k\log{n}))$.

It is beneficial to study the convex hulls and Tukey layers together. Their close relation is shown in Lemma~\ref{lem:convex relative position} that $U_{[t]}(X)\subseteq V_{[t]}(X)$. An upper bound on $\left|V_{[t]}(X)\right|$ is then automatically an upper bound on $\left|U_{[t]}(X)\right|$ and a lower bound on $\left|U_{[t]}(X)\right|$ is automatically a lower bound on $\left|V_{[t]}(X)\right|$.

\subsection{Notation and Definitions}
\label{subsec:notatiions}
We introduce the notation and definitions before reviewing the existing works. Let $X$ be a planar point set and $n=|X|$ be its size. When $X$ is a random point set, we use $\mathcal{P}$ to denote the convex polygon from which $X$ is sampled, and $k$ to denote the number of vertices of $\mathcal{P}$. Throughout this work, the convex polygon $\mathcal{P}$ is always closed and, without loss of generality, we assume the area of $\mathcal{P}$ is $1$. We now present the definition of the convex layer structure as in \cite{chazelle1985convex}.
\begin{definition}[Convex Layer]\label{Convex Layer Structure}
Given a planar point set $X$, the first convex layer $H_1(X)$ is defined to be the convex hull $H(X)$ of the whole point set. The $t$-th convex layer $H_t(X)$ is inductively defined to be the convex hull of the remaining points, after the points on the first $(t-1)$ convex layers have been removed from $X$.
\end{definition}

\begin{definition}[Convex Depth] The convex depth of $p\in X$ is said to be $t$ if $p$ is a vertex of $H_t(X)$.
\end{definition}

Next we define the Tukey layers, for which we need to introduce a classical notion known as the Tukey depth~\cite{tukey1975mathematics}. Instead of using the original definition, we use the following equivalent form for finite point sets.

\begin{definition}[Tukey Depth]
\label{def::Tukey depth}
Given a set $X$ of planar points, the Tukey depth of a point $p\in X$ is defined to be $N(p)+1$, where $N(p)$ is the minimum number of points in $X$ that are contained in any open half-plane with p on its boundary.
\end{definition}


\begin{remark}
	For brevity, we use ``one side of a line $\ell$'' to refer to one of the two open half-planes induced by $\ell$. Hence, if a point $p$ is on one side of a line $\ell$, the point $p$ is in an open half-plane induced by $\ell$. Besides, when we say a point is above (below) a line, we do not include the line either.
\end{remark}

\begin{remark}
Intuitively, if a point $p$ has Tukey depth $t$, then for all lines $\ell$ through $p$, there cannot be fewer than $(t-1)$ points on either side of $\ell$. At the same time, there exists a line $\ell_{0}$ through $p$ such that there are exactly $(t-1)$ points on one side of $\ell_{0}$.
\end{remark}

\begin{definition}[Tukey Layer]
\label{def::t-shell}
For $t\geq 1$, the subset $U_t(X)$ of $X$ is defined to be the set of points of Tukey depth $t$. The $t$-th Tukey layer, denoted by $S_t(X)$, is defined to be convex hull of $U_t(X)$. The size of $S_t(X)$ is defined to be $\left|U_t(X)\right|$.
\end{definition}

An illustration of Tukey layers is shown in Figure~\ref{fig:convex position}. As we shall prove in Lemma~\ref{lem:convex position}, the points in $U_t(X)$ are in the convex position and are thus exactly the vertices of $S_t(X)$, hence our definition of the size of $S_t(X)$ makes sense. The frequently used notations are listed in Table~\ref{tab:notations}. Note that $S_1(X)=H(X)$ by definition. For convenience, we also let $V_{[t]}(X):=\bigcup_{i=1}^{t}V_{i}(X)$ and $U_{[t]}(X):=\bigcup_{i=1}^{t}U_{i}(X)$.

\begin{table}[!t]
\centering
\begin{tabular}{|c|l|c|l|}
\hline
Symbol & Definition & Symbol & Definition\\
\hline
$H(X)$ & the convex hull of $X$ & $H_{t}(X)$ & the $t$-th convex layer of $X$\\
\hline
$V(X)$ & the vertices of $H(X)$ & $V_{t}(X)$ & the vertices of $H_{t}(X)$\\
\hline
& & $S_{t}(X)$ & the $t$-th Tukey layer of $X$\\
\hline
& & $U_{t}(X)$ & the vertices of $S_{t}(X)$\\
\hline
$A(X)$ & the area of $H_1(X)$ or $S_1(X)$ & $A_{t}(X)$ & the area of $S_{t}(X)$\\
\hline
\end{tabular}
\caption{Notations used in this work.}\label{tab:notations} 
\end{table}

\subsection{Related Work}
The main results in Section~\ref{section: upper bound for Tukey layer} and~\ref{section: upper bound for convex layer} are proved using the techniques developed for computing the expected convex hull size. We thus review the works that study the random convex hull, in terms of its area and the number of its vertices. Most of the research interests have been in their expectations, concentration bounds and asymptotic behaviors.

A fundamental result is that the expected size of a random convex hull is $O(k\log{n})$, when a large number $n$ of points are independently and uniformly sampled from a convex $k$-gon. The result was first stated by R\'eyi and Sulanke in~\cite{RS63} and a geometric proof was later provided by Har-Peled~\cite[Section 2]{har2011expected}. By the relation $\E\left|V(X)\right| = n\left[1-\E A(X)\right]$ proposed in \cite{efron1965convex} (the area of the $k$-gon is assumed to be 1 without loss of generality), an upper bound on $\E\left|V(X)\right|$ will follow from a lower bound on $\E A(X)$. Thus in \cite{har2011expected}, the effort is devoted to deriving a lower bound on the expected area of the convex hull. A critical observation in \cite[Section 2]{har2011expected} is that, if $p\in X$ is a vertex of the convex hull, then there exists a line $\ell$ through $p$ such that one side of $\ell$ contains no points of $X$. This gives a necessary condition on $p\in H(X)$, and a lower bound on the probability of the event $p\in H(X)$ can then be obtained. Multiplying this lower bound by $n$ immediately yields an lower bound on $\E A(X)$.

In addition, there have been a number of studies on the asymptotic behaviours of the convex hull size, such as \cite{RS63,affentranger1991convex,barany1988convex,masse2000lln,schutt1991convex}. R\'enyi and Sulanke proved that, given $X$ uniformly sampled from a convex $k$-gon on a plane, the expected size of the convex hull $\E\left|V(X)\right|$ is asymptotically $\frac{2}{3}k\log n + O(1)$ as $n\to\infty$, where the constant term depends on the polygon~\cite{RS63}. Affentranger and Wieacker generalized the result to higher dimensions and showed that, given that $X$ is uniformly sampled from a simple polytope in $\mathbb{R}^{d}$ with $k$ vertices, $\E\left|V(X)\right| = \frac{d}{(d+1)^{d-1}}k\log^{d-1}{n} + O(\log^{d-2}n)$~\cite{affentranger1991convex}. Masse proved that in the planar case, $|V(X)|/(\frac{2}{3}k\log n)$ converges to $1$ in probability~\cite{masse2000lln}.

There are also studies that assume different underlying distribution for the point set. When the $n$ points are sampled independently from a coordinate-wise independent distribution in $\mathbb{R}^{d}$, it is proved by He et al.\ in \cite{he2018maximal} that the expected size of the $t$-th convex layer is  $O(t^{d}\log^{d-1}(n/t^d))$. Some studies assume the point set is sampled independently and uniformly from other shapes rather than a convex polygon. In the case of a disc, the expected size of the convex hull is $\Theta(n^{1/3})$, due to Raynaud~\cite{raynaud1970enveloppe}.

\subsection{Our Contribution}
In this work, we introduce a new definition called Tukey layer and provide some fundamental properties of it. Then we study the expected size of the Tukey layers and convex layers when the point set $X$ is uniformly sampled from a $k$-gon. We show that the expected number of vertices of the first $t$ Tukey layers $\E\left|U_{[t]}(X)\right|=O(kt\log(n/k))$ and that of the first $t$ convex layers $\E\left|V_{[t]}(X)\right| = O(kt^3\log(n/kt^2))$. The first work to study the expected size of convex layers is \cite{he2018maximal} where He et al. proved that $\E\left|V_{t}(X)\right| = O(t^2\log(n/t^2))$ when $X$ follows a continuous component independent distribution. Their result can be extended to the cases when $X$ is sampled from a square or more generally a parallelogram, and their bound $O(t^2\log(n/t^2))$ is better than ours $O(t^3\log(n/t^2))$ in such cases. On the other hand, the techniques developed in \cite{he2018maximal} are towards the continuous component independent distribution, and we find it hard to extent them to other polygonal shapes except square or parallelogram. We also prove a matching lower bound $\E\left|U_{[t]}(X)\right|=\Omega (t\log{n})$ when $X$ is sampled from a triangle or a parallelogram, which, since $U_{[t]}(X) \subseteq V_{[t]}(X)$, is also a lower bound for $\E\left|V_{[t]}(X)\right|$ in the two special cases. Finally, we show that the two upper bounds are helpful in understanding the average case complexity of two partial shape fitting algorithms, both of which aim to enclose $(n-t)$ of the $n$ given points with a shape of the minimum area. One shape is parallelogram and the other is convex polygon.

\subsection{Organization}
In Section \ref{Preliminaries} we give the fundamental properties of convex layers and Tukey layers. In Section \ref{section: upper bound for Tukey layer}, we present the proof of the upper bound on the expected size of the first $t$ Tukey layers, when the $n$ points in $X$ are sampled from a convex polygon. In Section~\ref{section: upper bound for convex layer}, we prove the upper bound on the expected size of the first $t$ convex layers under the same setting. In Section~\ref{section: low bound for convex shell}, we derive the lower bounds on the expected size of the first $t$ Tukey layers for two special cases. Finally in Section \ref{section: application}, we apply our results to the average-case analysis of two shape fitting algorithms.

\section{Preliminaries}
\label {Preliminaries}
In this section, we prepare some fundamental facts on Tukey Layers and convex layers. The readers are recommended to have a look through the statements to get familiar with these properties. Nonetheless, we include the proofs for completeness. To the best of our knowledge, the observations on Tukey layers are new and not found in the literature. 

\subsection{Convex Layers, Tukey Layers and their Relation}
The following lemma shows that the points in $U_t(X)$ are exactly the vertices of the $t$-th Tukey layer $S_t(X)$, which justifies referring the size of $S_t(X)$ to $|U_t(X)|$ as we mentioned after Definition~\ref{def::t-shell}.

\begin{lemma}
\label{lem:convex position}
For a planar point set $X$, the points in the $t$-th Tukey layer of $X$ are in the position of a convex polygon. Equivalently, $U_{t}(X)$ has only one convex layer. 
\end{lemma}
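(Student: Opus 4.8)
The plan is to show that no point of $U_t(X)$ can lie inside the convex hull of the other points of $U_t(X)$; equivalently, every point of $U_t(X)$ is a vertex of $S_t(X)$, which is precisely the statement that $U_t(X)$ is in convex position. I would argue by contradiction: suppose $p \in U_t(X)$ lies in the convex hull of $U_t(X) \setminus \{p\}$. The goal is to derive a contradiction with $p$ having Tukey depth exactly $t$, by producing a line through $p$ with strictly fewer than $t-1$ points of $X$ strictly on one side of it.

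The key geometric fact I would use is the following. If $p$ is in the convex hull of $U_t(X)\setminus\{p\}$, then for \emph{every} line $\ell$ through $p$, there is at least one point of $U_t(X)\setminus\{p\}$ strictly on each side of $\ell$ (otherwise all of $U_t(X)\setminus\{p\}$ would lie in a closed half-plane not containing $p$ in its interior, contradicting that $p$ is in their hull). Now take the line $\ell_0$ through $p$ that witnesses the Tukey depth of $p$, so one side of $\ell_0$ — say the open half-plane $h$ — contains exactly $t-1$ points of $X$. By the fact just stated, at least one point $q \in U_t(X)$ lies in $h$. But $q$ has Tukey depth $t$, so applying the definition of Tukey depth to $q$ via the line through $q$ parallel to $\ell_0$ (or by a direct counting argument): the open half-plane bounded by that parallel line, on the same side as $h$, is contained in $h$ minus $\{q\}$ together with at most the points of $\partial\ell_0$, hence contains at most $t-2$ points of $X$. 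This contradicts $q$ having Tukey depth $t$, since it exhibits a line through $q$ with fewer than $t-1$ points on one side.

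I would carry this out in the following order: first, state and prove the convex-hull/half-plane fact (a point is in the convex hull of a finite set iff every line through it has points of the set on both sides, or on the line); second, set up the contradiction by fixing $p$, the witnessing line $\ell_0$, and the half-plane $h$ with exactly $t-1$ points; third, extract a point $q \in U_t(X) \cap h$; fourth, translate $\ell_0$ toward $h$ until it passes through $q$ (or the innermost point of $U_t(X)$ in $h$), and carefully count that the open half-plane on the far side now has at most $t-2$ points of $X$; finally, conclude that $q$ has Tukey depth at most $t-1$, contradicting $q \in U_t(X)$.

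The main obstacle is the fourth step — the careful counting when translating the line. One must handle boundary points lying exactly on $\ell_0$ or on the translated line, and make sure the translation does not accidentally sweep in new points while still strictly decreasing the count on the relevant side. A clean way to avoid edge-case headaches is to perturb the direction of $\ell_0$ slightly so that no two points of $X$ are on a common line in that direction, argue the strict inequality in the perturbed configuration, and then note the count is lower semicontinuous appropriately; alternatively, choose $q$ to be the point of $U_t(X)$ in $h$ that is extreme in the direction normal to $\ell_0$, so that the translated line through $q$ has \emph{no} points of $U_t(X)\cap h$ strictly beyond it and the bookkeeping becomes immediate. I expect the extreme-point choice to give the shortest rigorous argument.
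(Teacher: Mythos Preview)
Your proposal is correct and follows essentially the same argument as the paper: assume some $p\in U_t(X)$ lies in the convex hull of the remaining points of $U_t(X)$, take the witnessing line $\ell_0$ through $p$ with exactly $t-1$ points on one side $h$, use convexity to find $q\in U_t(X)$ strictly in $h$, and then the line through $q$ parallel to $\ell_0$ has at most $t-2$ points on its far side, contradicting $q\in U_t(X)$. The paper's proof is exactly this (phrased via ``$V_2 = U_t(X)\setminus V_1$'' and choosing $q\in V_1$), and your extra care about boundary/collinearity cases in step four is warranted but not needed once one assumes general position as the paper does later.
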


\begin{proof}
Suppose there are at least two convex layers in $U_{t}(X)$. Let $V_{1}$ denote the vertices of the convex hull of $U_{t}(X)$, and $V_{2}:=U_{t}(X)\setminus V_{1}$. For any point $p\in V_{2}$, let $\ell$ be the line through $p$ such that there are exactly $(t-1)$ points on one side. Notice that $\ell$ is through $p$ and thus also through the interior of the convex hull of $U_{1}(X)$. Hence, on the side of $\ell$ that contains $(t-1)$ points, there must exist a point $q$ which belongs to $V_{1}$. This implies that for the line $\ell^{\prime}$ through $q$ and parallel to $\ell$, there are at most $(t-2)$ points on its one side. This contradicts the fact that $q\in V_{1}\subseteq U_{t}(X)$. Finally we conclude that there can be only one single convex layer in each $U_{t}(X)$.
\end{proof}

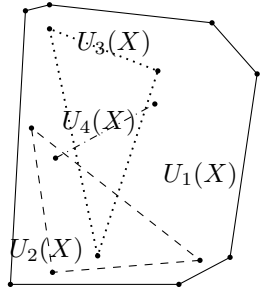
\begin{figure}[!t]
\centering
\begin{minipage}[h]{0.9\textwidth}
\centering
\begin{tikzpicture}
\node at (0.28, 3.76) {};
\draw node[fill,circle,minimum width=2pt, inner sep=0pt] at (0.28, 3.76){};

\node at (0.6, 3.84) {};
\draw node[fill,circle,minimum width=2pt, inner sep=0pt] at (0.6, 3.84){};

\node at (0.6, 3.52) {};
\draw node[fill,circle,minimum width=2pt, inner sep=0pt] at (0.6, 3.52){};

\node at (2.76, 3.6) {};
\draw node[fill,circle,minimum width=2pt, inner sep=0pt] at (2.76, 3.6){};

\node at (3.36, 2.92) {};
\draw node[fill,circle,minimum width=2pt, inner sep=0pt] at (3.36, 2.92){};

\node at (3, 0.48) {};
\draw node[fill,circle,minimum width=2pt, inner sep=0pt] at (3, 0.48){};

\node at (2.32, 0.12) {};
\draw node[fill,circle,minimum width=2pt, inner sep=0pt] at (2.32, 0.12){};

\node at (2.6, 0.44) {};
\draw node[fill,circle,minimum width=2pt, inner sep=0pt] at (2.6, 0.44){};

\node at (0.08, 0.12) {};
\draw node[fill,circle,minimum width=2pt, inner sep=0pt] at (0.08, 0.12){};

\node at (0.64, 0.28) {};
\draw node[fill,circle,minimum width=2pt, inner sep=0pt] at (0.64, 0.28){};

\node at (1.24, 0.5) {};
\draw node[fill,circle,minimum width=2pt, inner sep=0pt] at (1.24, 0.5){};

\node at (0.36, 2.2) {};
\draw node[fill,circle,minimum width=2pt, inner sep=0pt] at (0.36, 2.2){};

\node at (0.68, 1.8) {};
\draw node[fill,circle,minimum width=2pt, inner sep=0pt] at (0.68, 1.8){};

\node at (2.04, 2.96) {};
\draw node[fill,circle,minimum width=2pt, inner sep=0pt] at (2.04, 2.96){};

\node at (2, 2.52) {};
\draw node[fill,circle,minimum width=2pt, inner sep=0pt] at (2, 2.52){};

\draw (0.28, 3.76) -- (0.6, 3.84) -- (2.76, 3.6) -- (3.36, 2.92) -- (3, 0.48) -- (2.32, 0.12) -- (0.08, 0.12) -- (0.28, 3.76);

\draw[dashed] (2.6, 0.44) -- (0.64, 0.28) -- (0.36, 2.2) -- (2.6, 0.44);

\draw[dotted, line width=0.25mm] (0.6, 3.52) -- (1.24, 0.5) -- (2.04, 2.96) -- (0.6, 3.52);

\draw[dash dot] (0.68, 1.8) -- (2, 2.52);

\node at (3.35,2) [label=below left:$U_1(X)$]{};

\node at (1.3,1) [label=below left:$U_2(X)$]{};

\node at (2.2,3.75) [label=below left:$U_3(X)$]{};

\node at (2,2.7) [label=below left:$U_4(X)$]{};
\end{tikzpicture}
\caption{The boundary of the first three Tukey layers $U_1(X)$, $U_2(X)$ and $U_3(X)$ is plotted in solid, dashed, and dotted lines, respectively. The fourth Tukey layer $U_4(X)$ degenerates to a line segment, plotted in dashed dots. The vertices in each Tukey layer are in the convex positions.}
\label{fig:convex position}
\end{minipage}
\end{figure}

The next lemma relates Tukey layers and convex layers.

\begin{lemma}
\label{lem:convex relative position}
It holds that $U_{[t]}(X)\subseteq V_{[t]}(X)$.
\end{lemma}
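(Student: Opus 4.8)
The plan is to show that every point with convex depth at most $t$ also has Tukey depth at most $t$; equivalently, if a point $p$ has Tukey depth at least $t+1$, then it has convex depth at least $t+1$. So fix $p \in X$ and suppose its Tukey depth is $\geq t+1$, meaning that for every line $\ell$ through $p$ there are at least $t$ points of $X$ strictly on each side of $\ell$. I want to deduce that $p$ survives the removal of the first $t$ convex layers, i.e. $p \in H_i(X)$ for no $i \le t$; in other words $p$ lies in (or strictly inside) $H_{t+1}(X)$ — more precisely, $p$ is not a vertex of any of the first $t$ convex layers.

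The key step is an inductive claim: after peeling off the first $j$ convex layers (for $0 \le j \le t$), the remaining point set $X_j := X \setminus V_{[j]}(X)$ still has at least one point of $X$ strictly on each side of every line through $p$, and in particular $p \in X_j$ and $p$ is not a vertex of the convex hull of $X_j$ (so $p$ is not on the $(j+1)$-st convex layer). To prove this I will argue that removing a single convex hull layer cannot strip away all of the $\ge t$ points that lay on one side of some line through $p$, provided $j < t$. The clean way to see this: if $\ell$ is a line through $p$, then the vertices of the convex hull $V(X_j)$ contribute at most $\dots$ — here I need the elementary fact that the vertices of a convex polygon that lie strictly on one (open) side of a line through an interior point $p$ form a contiguous arc of the hull, and I claim that at most one "extreme" structural obstruction arises. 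Actually the cleanest bound is: each convex-hull layer can contain at most $2$ points on a given open side that are "needed", no — let me instead use the counting bound directly. A robust argument: on each side of $\ell$ the point $p$ initially sees $\ge t$ points; when we remove layer $H_{i+1}(X_i) = V(X_i)$, the points removed that lie on a fixed open side of $\ell$ number at least... this requires care.

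The main obstacle, and the part that needs the real idea, is bounding how many points on one side of a line through $p$ get removed when we peel one convex layer. The slick resolution is the same mechanism used in Lemma~\ref{lem:convex position}: if $p$ still has $\ge 1$ point $q$ on some open side of a line $\ell$ through $p$ \emph{among the vertices of the current hull}, then after removing those hull vertices, $p$ still has points on that side from deeper layers — but we must ensure we don't run out before peeling $t$ layers. I will instead track, for each of the two open sides of \emph{every} line through $p$, the count of remaining points, and show each peel decreases every such count by at most $1$: indeed, among the vertices of $\mathrm{conv}(X_j)$, at most one can be the closest point to $\ell$ on a given side in a way that matters — no. The honest version: I will show that if $p$ is a vertex of $\mathrm{conv}(X_j)$ then $p$ has Tukey depth $\le j+1$ in $X$ (not just in $X_j$), by exhibiting the line witnessing depth $1$ for $p$ in $X_j$ and charging the at most $j$ removed layers, each of which contributes at most one blocking vertex on the relevant side because consecutive nested hulls interleave. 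Contrapositively, Tukey depth $\ge t+1$ forces $p \notin V_{[t]}(X)$, which is exactly $U_{[t]}(X) \subseteq V_{[t]}(X)$. I expect the interleaving/contiguous-arc argument to be where the delicacy lies, and I would present it carefully with a picture analogous to Figure~\ref{fig:convex position}.
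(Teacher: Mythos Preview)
Your plan proves the wrong implication. You set out to show ``Tukey depth $\geq t+1$ $\Rightarrow$ convex depth $\geq t+1$'', which is the contrapositive of $V_{[t]}(X)\subseteq U_{[t]}(X)$, not of $U_{[t]}(X)\subseteq V_{[t]}(X)$. The converse you are aiming at is in fact false: take $X$ to be $m$ points on a circle together with its centre $p$. Then $p$ has convex depth $2$, but every line through $p$ has roughly $m/2$ points on each side, so the Tukey depth of $p$ is about $m/2$. Thus $p\in V_{[2]}(X)\setminus U_{[2]}(X)$ for large $m$. Your intermediate claim that ``each removed layer contributes at most one vertex on the relevant side'' is exactly what fails here: a single convex layer can put arbitrarily many vertices on one side of a line through an interior point. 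The correct bound is \emph{at least} one vertex on each side, not at most one.

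That ``at least one'' observation is precisely what the paper uses, in the correct direction. The paper's argument is a two-line contrapositive: if $p\notin V_{[t]}(X)$ then $p$ lies in the interior of each of $H_1(X),\dots,H_t(X)$; any line $\ell$ through $p$ therefore meets the interior of each $H_i(X)$, forcing at least one vertex of $V_i(X)$ on either side of $\ell$. Summing over $i=1,\dots,t$ gives at least $t$ points of $X$ on each side of every line through $p$, so the Tukey depth of $p$ is at least $t+1$, i.e.\ $p\notin U_{[t]}(X)$. No peeling induction, interleaving, or contiguous-arc argument is needed.
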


\begin{proof}
If a point $p\in X\setminus V_{[t]}(X)$, then $p$ can only lie on the $(t+1)$-st or a deeper layer of $X$. On any side of any line passing through $p$, there must be at least one vertex from each previous layer, including the $1$-st to the $t$-th. In total there are at least $t$ points and by Definition~\ref{def::t-shell} it holds that $p\notin U_{[t]}(X)$. In conclusion, $U_{[t]}(X)\cap (X \setminus V_{[t]}(X))=\emptyset$ and thus $U_{[t]}(X)\subseteq V_{[t]}(X)$.
\end{proof}

The following lemma discusses the relative position of Tukey layers. It shows that the vertices on the first $t$ Tukey layers are outside the $(t+1)$-st Tukey layer.
\begin{lemma}
\label{lem:shell position}
It holds that $U_{[t]}(X)\cap S_{t+1}(X)=\emptyset$. As a consequence, $S_{t}(X)\subseteq H\left(X\setminus U_{[t-1]}(X)\right)$.
\end{lemma}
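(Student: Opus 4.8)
The plan is to prove the contrapositive of the first assertion: if a point $p$ lies in the closed polygonal region $S_{t+1}(X)$, then the Tukey depth of $p$ is at least $t+1$, hence $p\notin U_{[t]}(X)$. Concretely, I will show that for every line $\ell$ through such a $p$, each of the two open sides of $\ell$ contains at least $t$ points of $X$; by Definition~\ref{def::Tukey depth} this forces $N(p)\geq t$, i.e.\ $p$ has depth $\geq t+1$. If $p$ happens to be a vertex of $S_{t+1}(X)$, then by Lemma~\ref{lem:convex position} $p\in U_{t+1}(X)$, and since the Tukey layers partition $X$ by depth we are done at once; so I may assume $p\in S_{t+1}(X)=H(U_{t+1}(X))$ but $p$ is not one of its vertices.

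Fix a line $\ell$ through $p$. The argument splits according to whether $\ell$ meets the interior of the polygon $Q:=S_{t+1}(X)$. If it does, then each open side of $\ell$ meets $Q$, and writing such a point of $Q$ as a convex combination of vertices shows that each open side actually contains a vertex $q$ of $Q$, i.e.\ a point $q\in U_{t+1}(X)$. The key step is then a parallel-line trick: let $m$ be the line through $q$ parallel to $\ell$; the open side of $m$ pointing away from $\ell$ is contained in the open side of $\ell$ that contains $q$, and since $q$ has Tukey depth $t+1$, that side of $m$ carries at least $t$ points of $X$, hence so does the corresponding side of $\ell$. Doing this for both sides gives the bound. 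If instead $\ell$ does not meet the interior of $Q$, then (using that $p$ is a boundary non-vertex point) $\ell$ must be the supporting line of $Q$ along the edge $e$ of $Q$ containing $p$; the two endpoints of $e$ are vertices of $Q$, hence points of $U_{t+1}(X)$ lying on $\ell$ itself, and the depth-$(t+1)$ property of either endpoint says exactly that both open sides of $\ell$ contain at least $t$ points of $X$. In all cases $N(p)\geq t$, which proves $U_{[t]}(X)\cap S_{t+1}(X)=\emptyset$.

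For the ``as a consequence'' part I would argue directly: by Lemma~\ref{lem:convex position} the vertices of $S_t(X)$ are exactly the points of $U_t(X)$, and $U_t(X)$ is disjoint from $U_{[t-1]}(X)$ since a point has a unique Tukey depth; hence $U_t(X)\subseteq X\setminus U_{[t-1]}(X)$, and monotonicity of the convex hull gives $S_t(X)=H(U_t(X))\subseteq H(X\setminus U_{[t-1]}(X))$ (the first assertion, applied with $t-1$ in place of $t$, additionally shows the inclusion is strict away from the vertices of $S_t$).

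The step I expect to require the most care is the case analysis on the location of $p$ on the boundary of $S_{t+1}(X)$, together with the attendant degeneracies: when $S_{t+1}(X)$ degenerates to a segment or a single point, or more generally when $\ell$ is a supporting line of $S_{t+1}(X)$ at $p$. In each such case the remedy is the same idea used above — pass to a vertex of $S_{t+1}(X)$ lying on the line $\ell$ and invoke its depth-$(t+1)$ property, since its two open sides with respect to that vertex coincide with those of $p$ — but phrasing this uniformly, without assuming general position, is the delicate part.
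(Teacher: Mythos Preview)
Your proposal is correct and follows essentially the same approach as the paper: the paper argues by contradiction from a point $p\in U_{[t]}(X)\cap S_{t+1}(X)$ and a witnessing line $\ell$ with $\le t-1$ points on one side, splitting into the same two cases (whether $\ell$ meets the interior of $S_{t+1}(X)$) and using the identical parallel-line trick through a vertex $q\in U_{t+1}(X)$ in the first case and the edge endpoints in the second. The only differences are cosmetic---you phrase it as a contrapositive and verify both sides of $\ell$, whereas the paper focuses on the single offending side---and your explicit derivation of the ``consequence'' via $U_t(X)\subseteq X\setminus U_{[t-1]}(X)$ is a welcome addition the paper leaves implicit.
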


\begin{proof}
Suppose not. We let $p\in U_{[t]}(X)\cap S_{t+1}(X)$ and $\ell$ be a line through $p$, on one side of which there are at most $(t-1)$ points.

If $\ell$ intersects the interior of $S_{t+1}(X)$, then there must be a $q\in U_{t+1}(X)$ on the side of $\ell$ where there are at most $(t-1)$ points. Let $\ell^{\prime}$ denote the line through $q$ and parallel to $\ell$. Then there are at most $(t-2)$ points on one side of $\ell^{\prime}$ and this contradicts the fact that $q\in U_{t+1}(X)$.

If $\ell$ does not intersect the interior of $S_{t+1}(X)$, then $p$ must lie on a side $rq$ of the boundary of $S_{t+1}(X)$. Here $r$, $q\in U_{t+1}(X)$ and the line segment $rq$ must be on the line $\ell$. As there are at most $(t-1)$ points on one side of $\ell$, we then have $r$, $q\in U_{[t]}(X)$, contradictory to the assumption that $rq$ is a side of the boundary of $S_{t+1}(X)$.
\end{proof}

\begin{lemma}
\label{lem:shell divide}
If $X_{1}\cup X_{2}=X$, then $U_{[t]}(X)\subseteq U_{[t]}(X_{1})\cup U_{[t]}(X_{2})$.
\end{lemma}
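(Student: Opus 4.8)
The plan is to prove the containment directly, point by point, exploiting the fact that Tukey depth can only decrease when one passes to a subset of the point set; unlike the earlier lemmas of this section, no geometric case analysis about the layers themselves is needed.

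First I would record the key observation: if $Y\subseteq X$ and $p\in Y$, then the Tukey depth of $p$ with respect to $Y$ is at most the Tukey depth of $p$ with respect to $X$. Write $N_X(p)$ for the minimum number of points of $X$ lying on one side of a line through $p$, so that by Definition~\ref{def::Tukey depth} the Tukey depth of $p$ in $X$ equals $N_X(p)+1$. Picking a line $\ell$ through $p$ that realizes this minimum, the open half-plane bounded by $\ell$ that contains $N_X(p)$ points of $X$ contains at most $N_X(p)$ points of $Y$, since $Y\subseteq X$. Hence $N_Y(p)\le N_X(p)$, and the Tukey depth of $p$ in $Y$ is at most the Tukey depth of $p$ in $X$. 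Intuitively, deleting points can only make a surviving point shallower.

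Next I would take an arbitrary $p\in U_{[t]}(X)$. By definition $p\in X$ and $p$ has Tukey depth at most $t$ in $X$. Since $X=X_1\cup X_2$, the point $p$ belongs to $X_1$ or to $X_2$ (possibly both); without loss of generality suppose $p\in X_1$. Applying the observation with $Y=X_1$, the Tukey depth of $p$ with respect to $X_1$ is at most $t$, which means $p\in U_{[t]}(X_1)$. The symmetric argument covers the case $p\in X_2$, and combining the two cases yields $U_{[t]}(X)\subseteq U_{[t]}(X_1)\cup U_{[t]}(X_2)$.

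Since the whole argument reduces to the one-line monotonicity of Tukey depth under subsets, I do not expect a genuine obstacle here. The only point that warrants a moment's care is that a point of $U_{[t]}(X)$ need not lie in both $X_1$ and $X_2$, so the argument must branch on which of the two parts contains it; when it lies in both, either branch suffices.
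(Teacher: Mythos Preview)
Your proof is correct and follows essentially the same approach as the paper: both arguments take a point $p\in U_{[t]}(X)$, use a line through $p$ with at most $t-1$ points of $X$ on one side, observe that the same side contains at most $t-1$ points of whichever $X_i$ contains $p$, and conclude $p\in U_{[t]}(X_i)$. Your phrasing of this as monotonicity of Tukey depth under passing to subsets is a clean way to package the same idea.
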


\begin{proof}
For each point $p\in U_{[t]}(X)$, there exists a line $\ell$ through it, on one side of which there are at most $(t-1)$ points of $X$. Then there will be neither more than $(t-1)$ points of $X_{1}$ nor more than $(t-1)$ points of $X_{2}$ on the same side of $\ell$. Then we have $p\in U_{[t]}(X_{1})$ when $p\in X_{1}$, and $p\in U_{[t]}(X_{2})$ when $p\in X_{2}$.
\end{proof}
The following corollary is a generalization to $k$ subsets.
\begin{corollary}
\label{coro:shell t-part}
Given $X=X_{1}\cup X_{2}\cup\cdots\cup X_{k}$, we have
\[U_{[t]}(X)\subseteq U_{[t]}(X_{1})\cup U_{[t]}(X_{2})\cup\cdots\cup U_{[t]}(X_{k}).\]
\end{corollary}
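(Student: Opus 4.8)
The plan is to argue by induction on the number $k$ of subsets, with Lemma~\ref{lem:shell divide} serving both as the base case and as the inductive engine. The case $k=1$ is trivial, and the case $k=2$ is exactly the statement of Lemma~\ref{lem:shell divide}. So assume $k\geq 3$ and that the claim holds for every decomposition into $k-1$ subsets.

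For the inductive step, let $X = X_1 \cup X_2 \cup \cdots \cup X_k$ and group the first $k-1$ pieces together by setting $Y := X_1 \cup \cdots \cup X_{k-1}$, so that $X = Y \cup X_k$ is a two-part decomposition. Applying Lemma~\ref{lem:shell divide} to it gives
\[
U_{[t]}(X) \subseteq U_{[t]}(Y) \cup U_{[t]}(X_k).
\]
Since $Y$ is itself decomposed into the $k-1$ subsets $X_1,\ldots,X_{k-1}$, the inductive hypothesis yields $U_{[t]}(Y) \subseteq U_{[t]}(X_1) \cup \cdots \cup U_{[t]}(X_{k-1})$, and combining the two inclusions completes the induction. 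Alternatively, one can avoid induction entirely and mimic the proof of Lemma~\ref{lem:shell divide} directly: for each $p \in U_{[t]}(X)$, pick a line $\ell$ through $p$ with at most $t-1$ points of $X$ on one side; then that side contains at most $t-1$ points of each $X_i$, and since $p$ lies in some $X_j$, the same line $\ell$ certifies that $p \in U_{[t]}(X_j)$.

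There is no genuine obstacle here; the only point deserving a moment's care is that when the inductive hypothesis is invoked it is applied to the point set $Y$ together with its decomposition into exactly $k-1$ pieces, which is precisely the form the hypothesis is stated in. Degenerate cases cause no trouble: if some $X_i$ is empty then $U_{[t]}(X_i) = \emptyset$ and that term simply drops out of the union, leaving the inclusion intact.
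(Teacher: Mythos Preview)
Your proof is correct and matches the paper's intent: the paper states the corollary without proof, simply calling it ``a generalization to $k$ subsets'' of Lemma~\ref{lem:shell divide}, which is exactly the induction you spell out. The direct alternative you offer is also fine and is just the $k$-set version of the Lemma~\ref{lem:shell divide} argument.
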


The following lemma is an analogous result of Lemma~\ref{lem:shell divide} for $V_{[t]}$.

\begin{lemma}
\label{lem:layer divide}
If $X_{1} \cup X_{2} = X$, then $H_{t}(X_1)\cup H_{t}(X_{2})\subseteq H_{t}(X)$ and $V_{[t]}(X)\subseteq V_{[t]}(X_1)\cup V_{[t]}(X_2)$.
\end{lemma}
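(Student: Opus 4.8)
The plan is to prove the two containments separately, mirroring the structure of the proof of Lemma~\ref{lem:shell divide} but using the characterization of convex-layer membership in terms of ``peeling'' rather than Tukey-depth half-planes.

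First I would prove $H_t(X_1) \cup H_t(X_2) \subseteq H_t(X)$ by induction on $t$. The base case $t=1$ is just the monotonicity of the convex hull: $X_1 \subseteq X$ implies $H(X_1) \subseteq H(X)$, and similarly for $X_2$. For the inductive step, recall that $H_t(X)$ is the convex hull of $X \setminus V_{[t-1]}(X)$. The key observation I would establish is that peeling can only remove points ``earlier'' in the bigger set: if $p \in V_{[t-1]}(X_i)$ then $p \in V_{[t-1]}(X)$ — equivalently $V_{[t-1]}(X_i) \subseteq V_{[t-1]}(X)$ for $i=1,2$. Granting this, $X_i \setminus V_{[t-1]}(X_i) \supseteq (X_i \setminus V_{[t-1]}(X)) = X_i \cap (X \setminus V_{[t-1]}(X))$, so the point set whose hull defines $H_t(X_i)$ contains $X_i \cap (X\setminus V_{[t-1]}(X))$; taking hulls and using monotonicity again against $H(X\setminus V_{[t-1]}(X)) = H_t(X)$ would give the claim, provided the inclusion $V_{[t-1]}(X_i)\subseteq V_{[t-1]}(X)$ has already been proved for the index $t-1$. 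So the real content is the single statement ``convex depth in a subset is at least the convex depth in the superset,'' i.e.\ $V_{[s]}(X_i) \subseteq V_{[s]}(X)$ for all $s$, which I would prove by induction on $s$ in tandem: a point of convex depth $\le s$ in $X_i$ is removed within the first $s$ peeling rounds of $X_i$, and since each peeling round of $X_i$ removes only vertices that (by the inductive hypothesis applied to the already-peeled smaller sets, plus monotonicity of hulls) are also removed in the corresponding or earlier round of $X$, it is removed within the first $s$ rounds of $X$ as well.

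Next, the containment $V_{[t]}(X) \subseteq V_{[t]}(X_1) \cup V_{[t]}(X_2)$ follows almost immediately. Take $p \in V_{[t]}(X)$, say $p \in X_1$ (the other case is symmetric). Since $p$ has convex depth at most $t$ in $X$, there is some round $j \le t$ in which $p$ is a vertex of $H_j(X)$. I want to conclude $p$ has convex depth at most $t$ in $X_1$. The cleanest route: $p$ is a vertex of $H_j(X) = H(X \setminus V_{[j-1]}(X))$, so $p$ is an extreme point of $X \setminus V_{[j-1]}(X)$, hence also an extreme point of the smaller set $X_1 \setminus V_{[j-1]}(X) \supseteq X_1 \setminus V_{[j-1]}(X_1)$ (using $V_{[j-1]}(X_1)\subseteq V_{[j-1]}(X)$ from the first part). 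An extreme point of $X_1 \setminus V_{[j-1]}(X_1)$ is a vertex of its convex hull $H_j(X_1)$ (it does not lie in the hull of the remaining points), so $p$ has convex depth exactly $j\le t$ in $X_1$, giving $p \in V_{[t]}(X_1)$.

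I expect the main obstacle to be bookkeeping the nested induction cleanly: the two statements ``$V_{[s]}(X_i)\subseteq V_{[s]}(X)$'' and ``$H_s(X_i)\subseteq H_s(X)$'' are mutually reinforcing across values of $s$, and one must be careful that each peeling round's analysis only invokes the inductive hypothesis at strictly smaller $s$, so the induction is well-founded. A secondary subtlety is the distinction between a point being \emph{in} the convex hull polygon $H_j(X)$ versus being a \emph{vertex} of it; I would phrase things in terms of extreme points and use that $p\in X_i$ is an extreme point of a set $S$ iff $p\notin H(S\setminus\{p\})$, which transfers cleanly to subsets of $S$ containing $p$. Once these two points are handled, everything else is monotonicity of the convex hull operator.
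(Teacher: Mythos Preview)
Your ``key observation'' $V_{[s]}(X_i)\subseteq V_{[s]}(X)$ (equivalently, ``convex depth in a subset is at least the convex depth in the superset'') is stated in the wrong direction and is false. Take $X_1=\{p\}$ with $p$ any interior point of $H(X)$: then $p\in V_{[1]}(X_1)$ but $p\notin V_{[1]}(X)$. The true relation is the opposite one, depth in the superset is at least depth in the subset (this is exactly what Lemma~\ref{lem:dahal} gives when applied repeatedly), i.e.\ $V_{[s]}(X)\cap X_i\subseteq V_{[s]}(X_i)$. Moreover, even granting your reversed inclusion, the monotonicity chain you write does not close: from $X_i\setminus V_{[t-1]}(X_i)\supseteq X_i\cap(X\setminus V_{[t-1]}(X))$ you obtain $H_t(X_i)\supseteq H(X_i\cap(X\setminus V_{[t-1]}(X)))$, while $X_i\cap(X\setminus V_{[t-1]}(X))\subseteq X\setminus V_{[t-1]}(X)$ gives that the middle set's hull is $\subseteq H_t(X)$; the two inclusions point in opposite directions and yield no comparison between $H_t(X_i)$ and $H_t(X)$.

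What actually makes the induction go through is feeding the \emph{second} containment at level $t-1$ back into the proof of the \emph{first} at level $t$: the inductive hypothesis $V_{[t-1]}(X)\subseteq V_{[t-1]}(X_1)\cup V_{[t-1]}(X_2)$ (applied with the disjoint decomposition $X_1\cup(X\setminus X_1)=X$) yields $X_1\setminus V_{[t-1]}(X_1)\subseteq X\setminus V_{[t-1]}(X)$, and now monotonicity of the hull gives $H_t(X_1)\subseteq H_t(X)$ directly. With this corrected orientation your argument for the second containment becomes the paper's: if $p\in V_{[t]}(X)$ then $p$ is not interior to $H_t(X)\supseteq H_t(X_i)$, hence not interior to $H_t(X_i)$, hence $p\in V_{[t]}(X_i)$ for whichever $i$ has $p\in X_i$. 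Your second-part argument also invokes the wrong-direction inclusion (``using $V_{[j-1]}(X_1)\subseteq V_{[j-1]}(X)$ from the first part''), so it inherits the same gap.
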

\begin{proof}
We prove the lemma by induction on $t$. The statement is well-known when $t=1$. Assume it holds for $t$ and we shall prove it for $(t+1)$. By the induction hypothesis, $V_{[t]}(X)\subseteq V_{[t]}(X_1)\cup V_{[t]}(X\setminus X_1)$, we then have
\[X\setminus V_{[t]}(X)\supseteq X_{1}\setminus V_{[t]}(X)\supseteq X_{1}\setminus \left(V_{[t]}(X_1)\cup V_{[t]}(X\setminus X_1)\right)=X_1\setminus V_{[t]}(X_1).\]
Further by Definition~\ref{Convex Layer Structure}, \[H_{t+1}(X_1)=H\left(X_1\setminus V_{[t]}(X_1)\right) \subseteq H\left(X\setminus V_{[t]}(X)\right)=H_{t+1}(X).\]
Similarly, $H_{t+1}(X_2) \subseteq H_{t+1}(X)$. Therefore $H_{t+1}(X_1)\cup H_{t+1}(X_{2})\subseteq H_{t+1}(X)$.

Now we prove $V_{[t+1]}(X)\subseteq V_{[t+1]}(X_1)\cup V_{[t+1]}(X_2)$. For a point $p\in V_{[t+1]}(X)$, $p$ cannot be in the interior of $H_{t+1}(X)$. We have already shown that $H_{t+1}(X_1)\cup H_{t+1}(X_{2})\subseteq H_{t+1}(X)$, so $p$ cannot be in the interior of either $H_{t+1}(X_1)$ or $H_{t+1}(X_2)$. If $p\in X_1$, then $p\in V_{[t+1]}(X_1)$; otherwise $p\in V_{[t+1]}(X_2)$.
\end{proof}

\begin{corollary}
\label{coro:layer t-part}
Given $X=X_{1}\cup X_{2}\cup\cdots\cup X_{k}$, we have
\[V_{[t]}(X)\subseteq V_{[t]}(X_{1})\cup V_{[t]}(X_{2})\cup\cdots\cup V_{[t]}(X_{k}).\]
\end{corollary}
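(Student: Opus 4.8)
The final statement is Corollary~\ref{coro:layer t-part}, which generalizes Lemma~\ref{lem:layer divide} from two subsets to $k$ subsets. The plan is a straightforward induction on the number of parts $k$.

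\medskip

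\textbf{Proof proposal.} The plan is to induct on $k$. The base case $k=1$ is trivial, since $V_{[t]}(X)\subseteq V_{[t]}(X_1)$ when $X=X_1$; the case $k=2$ is exactly the second inclusion proved in Lemma~\ref{lem:layer divide}. For the inductive step, assume the statement holds for any decomposition into $k-1$ parts, and suppose $X=X_1\cup X_2\cup\cdots\cup X_k$. Group the last $k-1$ parts together by setting $Y:=X_2\cup X_3\cup\cdots\cup X_k$, so that $X=X_1\cup Y$ is a decomposition into two sets. Applying Lemma~\ref{lem:layer divide} to this two-set decomposition gives
\[
V_{[t]}(X)\subseteq V_{[t]}(X_1)\cup V_{[t]}(Y).
\]
Now $Y=X_2\cup X_3\cup\cdots\cup X_k$ is a decomposition into $k-1$ parts, so the induction hypothesis yields $V_{[t]}(Y)\subseteq V_{[t]}(X_2)\cup\cdots\cup V_{[t]}(X_k)$. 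Substituting this into the previous inclusion completes the step:
\[
V_{[t]}(X)\subseteq V_{[t]}(X_1)\cup V_{[t]}(X_2)\cup\cdots\cup V_{[t]}(X_k).
\]

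\medskip

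There is essentially no obstacle here; the only point requiring a moment's care is that Lemma~\ref{lem:layer divide} is stated for an arbitrary partition $X_1\cup X_2=X$ (it does not require $X_1$ and $X_2$ to be disjoint, and in any case we may take $Y$ to be the union of the remaining sets), so the bracketing $X=X_1\cup Y$ is a legitimate instance of that lemma. One could equally induct in the other direction, peeling off $X_k$ and grouping $X_1\cup\cdots\cup X_{k-1}$, or simply iterate the two-set inclusion $k-1$ times; all three presentations are equivalent and each is short. (The analogous Corollary~\ref{coro:shell t-part} for $U_{[t]}$ follows in exactly the same way from Lemma~\ref{lem:shell divide}.)
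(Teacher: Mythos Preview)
Your proof is correct and is exactly the intended argument: the paper states this corollary without proof, as the immediate inductive extension of Lemma~\ref{lem:layer divide} to $k$ parts, and your induction on $k$ (peeling off one part and applying the two-set lemma) is precisely that extension.
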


\subsection{Convex Depth}
The following lemma examines how the convex depth of a point $p$ in $X$ changes after an additional point $q$ in added to  $X$.

\begin{lemma}
\label{lem:dahal}
Given a planar point set $X$ and a point $p\in X$, the convex depth of $p$ will either remain unchanged or increase at most by 1 after an additional point $q$ is added into $X$.
\end{lemma}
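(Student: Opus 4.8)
The plan is to track what happens to the convex layers $H_1(X), H_2(X), \dots$ as we pass from $X$ to $X' := X \cup \{q\}$, and show that the layer structure can "shift down" by at most one level. Write $d(p)$ for the convex depth of $p$ in $X$ and $d'(p)$ for its depth in $X'$. We must show $d(p) \le d'(p) \le d(p)+1$ for every $p \in X$. I would prove the two inequalities separately, the lower bound being essentially immediate and the upper bound carrying the real content.

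For the lower bound $d'(p) \ge d(p)$: I would show more generally that $V_{[t]}(X) \subseteq V_{[t]}(X')$, i.e.\ the first $t$ layers can only grow. Indeed, apply Lemma~\ref{lem:layer divide} with the decomposition $X' = X \cup \{q\}$: we get $V_{[t]}(X') \subseteq V_{[t]}(X) \cup V_{[t]}(\{q\})$, which is the wrong direction, so instead I would use the other half of Lemma~\ref{lem:layer divide}, namely $H_t(X) \cup H_t(\{q\}) \subseteq H_t(X')$ — more precisely the inductive argument inside its proof — to conclude $H_t(X) \subseteq H_t(X')$ for all $t$. From $H_t(X) \subseteq H_t(X')$ one deduces that any $p \in X$ with $d(p) = t$ lies on $H_t(X)$, hence is not in the interior of $H_t(X')$, so $d'(p) \le t$; combined with the fact that interiors only grow, $p$ is strictly inside $H_1(X'), \dots, H_{t-1}(X')$, giving $d'(p) = t$ or... wait, actually this gives $d'(p) \ge$ something. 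Let me restate: since $p$ is a vertex of $H_t(X)$, $p \notin \mathrm{int}\, H_i(X)$ for $i < t$ but $p \in \mathrm{int}\, H_i(X)$... no. The clean statement is: $p \in V_{[t]}(X) \iff p$ is not in the interior of $H_{t+1}(X)$-region iterated; I would simply argue $d(p) = t$ means $p \notin V_{[t-1]}(X)$, and since removing the first $t-1$ layers of $X'$ removes a \emph{superset} of... this needs care. The robust route: prove by induction on $t$ that $X \setminus V_{[t]}(X) \supseteq X \setminus V_{[t]}(X')$ restricted appropriately, yielding $d'(p) \ge d(p)$.

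For the upper bound $d'(p) \le d(p)+1$, which I expect to be the main obstacle: the intuition is that adding a single point $q$ can "use up" at most one layer's worth of covering. I would argue inductively on $t = d(p)$. Consider the point set $Y := X \setminus V_{[t-1]}(X)$; then $p \in V_1(Y) = V(Y)$, i.e.\ $p$ is on the convex hull of $Y$, so there is a line $\ell$ through $p$ with no point of $Y$ strictly on one side. Now in $X'$, after peeling off the first $d(p)$ layers we have removed (by the monotonicity $V_{[t]}(X) \subseteq V_{[t]}(X')$ established above, plus possibly $q$) a set containing $V_{[t-1]}(X)$; the remaining set is contained in $Y \cup \{q\}$ minus whatever else got peeled. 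Since on one side of $\ell$ the only possible point is $q$, after peeling one more layer $q$ is gone (if it was there at all), and $p$ becomes a hull vertex of what remains. Hence $d'(p) \le d(p) + 1$. The delicate point — and the reason a fully rigorous write-up is more than a line — is controlling exactly which points of $Y$ survive the peeling in $X'$: we know $V_{[d(p)]}(X) \subseteq V_{[d(p)]}(X')$ so everything peeled from $X$ is peeled from $X'$, but $X'$ might peel \emph{additional} points of $Y$, which only helps (fewer points left on the relevant side of $\ell$). Assembling this monotonicity carefully, together with the single-point budget argument, completes the proof; the key lemma to lean on throughout is Lemma~\ref{lem:layer divide} and its corollary.
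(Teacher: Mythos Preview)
Your attempt has a genuine gap in the upper-bound direction. The claim you lean on, ``the monotonicity $V_{[t]}(X)\subseteq V_{[t]}(X')$ established above'', is false and was not established. What Lemma~\ref{lem:layer divide} (applied to $X'=X\cup\{q\}$) actually gives is $H_t(X)\subseteq H_t(X')$ for every $t$. Unwinding this --- a point of $X$ outside the interior of $H_{t+1}(X')$ is also outside the interior of $H_{t+1}(X)$ --- yields the \emph{reverse} inclusion $V_{[t]}(X')\cap X\subseteq V_{[t]}(X)$, i.e.\ $d'(p)\ge d(p)$. That is precisely your lower bound and cannot double as the engine for the upper bound. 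Concrete counterexample to your stated inclusion: let $X=\{a,b,c\}$ be a triangle and place $q$ so that $a$ lies in the interior of $\operatorname{conv}\{b,c,q\}$; then $a\in V_{[1]}(X)$ but $a\notin V_{[1]}(X')$.

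This breaks the peeling argument. You need that after stripping $t=d(p)$ layers from $X'$ the residual set sits inside $Y\cup\{q\}$, equivalently $V_{[t-1]}(X)\subseteq V_{[t]}(X')$. But that statement is exactly $d'(r)\le d(r)+1$ for all $r$ with $d(r)\le t-1$ --- the inductive instance of the very thing you are proving, not a by-product of the lower bound. Your closing remark (``$X'$ might peel additional points of $Y$, which only helps'') is likewise backwards: the valid containment says $X'$ peels \emph{fewer} points of $X$ in its first $t$ layers, so more of $Y$, not less, may remain on either side of $\ell$. The paper avoids all of this by citing \cite[Lemma~3.1]{dalal2004counting}, whose proof establishes $V_t(X)\subseteq V_t(X')\cup V_{t+1}(X')$ via an induction that explicitly tracks the two possible forms of the residual set; a from-scratch argument along your lines would need that same two-case bookkeeping rather than a single monotonicity claim.
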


\begin{proof}
By the proof of \cite[Lemma 3.1]{dalal2004counting}, we know that $V_{t}(X)\subseteq V_{t}\left(X\cup\{q\}\right)\cup V_{t+1}\left(X\cup\{q\}\right)$. For $p\in V_{t}(X)$, either $p\in V_{t}\left(X\cup\{q\}\right)$ or $p\in V_{t+1}\left(X\cup\{q\}\right)$. In other words, the convex depth of $p$ will either remain unchanged or increase by 1.
\end{proof}

\subsection{Expected Area and Expected Size of Tukey Layers}
The following lemma shows the relation between the expected size and the expected area of the Tukey layers.

\begin{lemma}
\label{area and size}
Let $C\subseteq\R^2$ be a bounded and closed convex set of unit area and $X$ be the set of $n$ points chosen independently and uniformly from $C$. Then
\[
\E \left|U_{[t]}(X)\right|\leq n\left[1-\E A(S_{t+1}(X))\right].
\]
\end{lemma}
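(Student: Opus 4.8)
The plan is to adapt Efron's identity-style argument, which relates the expected number of hull vertices to the expected missing area, but applied to the union of the first $t$ Tukey layers rather than just the convex hull. The key structural fact I would exploit is Lemma~\ref{lem:shell position}: the points of $U_{[t]}(X)$ all lie outside (or on the boundary of) $S_{t+1}(X)$, so a point can belong to $U_{[t]}(X)$ only if it fails to land in the interior of the $(t+1)$-st Tukey layer. This gives the containment of events that will drive the inequality.

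First I would fix an index $i \in \{1, \dots, n\}$ and write $X = \{p_1, \dots, p_n\}$, with the $p_j$ i.i.d.\ uniform on $C$. Let $X^{(i)} = X \setminus \{p_i\}$. The crucial claim is that if $p_i \in U_{[t]}(X)$, then $p_i \notin \operatorname{int}(S_{t+1}(X^{(i)}))$; equivalently, conditioned on $X^{(i)}$, the point $p_i$ must avoid the interior of the $(t+1)$-st Tukey layer of the remaining $n-1$ points. To see this: if $p_i \in U_{[t]}(X)$ then by Lemma~\ref{lem:shell position} applied to $X$ we have $p_i \notin \operatorname{int}(S_{t+1}(X))$, and I would argue that adding the point $p_i$ back can only enlarge (or keep fixed) the $(t+1)$-st Tukey layer region relative to $X^{(i)}$, or more carefully, that $S_{t+1}(X^{(i)})$ is ``inside'' the relevant region so that $p_i$ outside $S_{t+1}(X)$ forces $p_i$ outside $\operatorname{int}(S_{t+1}(X^{(i)}))$ — the monotonicity analog of Lemma~\ref{lem:dahal} / Lemma~\ref{lem:layer divide} but for Tukey depth. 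Then, conditioning on $X^{(i)}$,
\[
\Pr[p_i \in U_{[t]}(X) \mid X^{(i)}] \leq \Pr[p_i \notin \operatorname{int}(S_{t+1}(X^{(i)})) \mid X^{(i)}] = 1 - A(S_{t+1}(X^{(i)})),
\]
using that $C$ has unit area and $p_i$ is uniform, so the probability $p_i$ lands in a fixed measurable set equals that set's area.

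Next I would sum over $i$ and take expectations. By linearity, $\E|U_{[t]}(X)| = \sum_{i=1}^n \Pr[p_i \in U_{[t]}(X)] \leq \sum_{i=1}^n \E[1 - A(S_{t+1}(X^{(i)}))] = n\bigl(1 - \E A(S_{t+1}(X^{(1)}))\bigr)$, where the last equality is by symmetry of the $p_i$. It remains to replace $\E A(S_{t+1}(X^{(1)}))$, an expectation over $n-1$ points, by $\E A(S_{t+1}(X))$, an expectation over $n$ points. Here I would invoke a monotonicity statement: adding a point to the set can only increase the $(t+1)$-st Tukey layer, so $A(S_{t+1}(X^{(1)})) \leq A(S_{t+1}(X))$ pointwise, hence $\E A(S_{t+1}(X^{(1)})) \leq \E A(S_{t+1}(X))$, which yields the claimed bound $\E|U_{[t]}(X)| \leq n[1 - \E A(S_{t+1}(X))]$.

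The main obstacle I anticipate is establishing the monotonicity of the Tukey layer region under insertion of a point — both the version needed for the conditional probability step (relating $S_{t+1}(X)$ and $S_{t+1}(X^{(i)})$) and the version needed to pass from $n-1$ to $n$ points. The convex-layer analog is handled cleanly by Lemma~\ref{lem:layer divide} and Lemma~\ref{lem:dahal}, but for Tukey depth one must be careful: adding a point can increase some Tukey depths (by at most a controlled amount, analogous to Lemma~\ref{lem:dahal}), and I would need the precise statement that $U_{[t]}(X^{(i)}) \subseteq U_{[t]}(X \cup \text{nothing})$-type containment or, at the level of regions, $S_{t+1}(X^{(i)}) \subseteq S_{t+1}(X)$ after suitable reasoning via Lemma~\ref{lem:shell divide}. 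If a clean pointwise monotonicity of the area is not available, an alternative is to absorb the $n$-vs-$(n-1)$ discrepancy into the inequality directly by noting it only costs a lower-order term, but the cleanest route is to prove the region monotonicity as a short separate observation before assembling the three displayed steps above.
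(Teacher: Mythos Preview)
Your leave-one-out route differs from the paper's, which is much shorter: the paper applies Lemma~\ref{lem:shell position} directly to $X$ (not to $X^{(i)}$), observing that $|U_{[t]}(X)|$ is bounded by the number of sample points lying outside $S_{t+1}(X)$, and then equates the expected number of such points with $n\,\E[1-A(S_{t+1}(X))]$ in one line. There is no passage from $n-1$ to $n$ points and no monotonicity lemma is invoked.

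Your Efron-style argument is more explicit about independence, and the key implication you need in the first step --- that $p_i\in U_{[t]}(X)$ forces $p_i\notin\operatorname{int}\bigl(S_{t+1}(X^{(i)})\bigr)$ --- is indeed provable by the parallel-line argument you sketch (it is essentially the proof of Lemma~\ref{lem:shell position} applied to $X^{(i)}$). That part is fine.

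The genuine gap is in your last step, and it is not merely a missing lemma but a wrong-direction inequality. From steps 1--3 you obtain
\[
\E\lvert U_{[t]}(X)\rvert \;\leq\; n\bigl[\,1-\E A\bigl(S_{t+1}(X^{(1)})\bigr)\bigr].
\]
You then assert ``adding a point can only increase the $(t+1)$-st Tukey layer,'' i.e.\ $A\bigl(S_{t+1}(X^{(1)})\bigr)\leq A\bigl(S_{t+1}(X)\bigr)$. But this gives
\[
1-\E A\bigl(S_{t+1}(X^{(1)})\bigr)\;\ge\;1-\E A\bigl(S_{t+1}(X)\bigr),
\]
which is the \emph{wrong} direction: from $a\leq b$ and $b\geq c$ you cannot conclude $a\leq c$. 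To reach the stated lemma via your route you would need the \emph{reverse} monotonicity $A\bigl(S_{t+1}(X^{(1)})\bigr)\geq A\bigl(S_{t+1}(X)\bigr)$, i.e.\ that removing a point enlarges the $(t+1)$-st Tukey layer, which is already false for $t=0$ (the convex hull shrinks when a point is removed). So the leave-one-out argument, as written, does not close, and the obstacle is not the one you anticipated. Your fallback --- settling for the bound with $\E A(S_{t+1}(\cdot))$ taken over $n-1$ points and absorbing the discrepancy asymptotically --- is the only viable salvage along this route, but it does not prove the lemma exactly as stated. The paper avoids the issue entirely by never decoupling $p_i$ from the set that defines the layer.
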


\begin{proof}
On the one hand, by Lemma~\ref{lem:shell position}, the points in $U_{[t]}(X)$ must be outside $S_{t+1}(X)$. On the other hand, there might be points of $X\setminus U_{[t]}(X)$ not lying in $S_{t+1}(X)$, either. Since those points not belonging to  $S_{t+1}(X)$ are uniform in $C\setminus S_{t+1}(X)$,  in expectation we have
\[\E \left|U_{[t]}(X)\right|\leq n\E \left[1-A(S_{t+1}(X))\right]=n\left[1-\E A(S_{t+1}(X))\right].\qedhere\]
\end{proof}

\subsection{Upper (Lower) Hull of Tukey Layer}\label{sec:upper_lower_hulls}
For a general convex polygon, let $P_1$ be the vertex with the smallest $x$-coordinate and $Q_1$ the vertex with the largest $x$-coordinate, where we break the tie by choosing the point with the largest $y$-coordinate for both points. Then, the \emph{upper hull} refers to the boundary of the polygon from $P_1$ to $Q_1$ in the clockwise orientation. Similarly, let $P_2$ be the vertex with the smallest $x$-coordinate and $Q_2$ the vertex of the largest $x$-coordinate of the polygon, where we break the tie by choosing the point with the smallest $y$-coordinate. It may happen that $P_1=P_2$ and $Q_1=Q_2$. The \emph{lower hull} refers to the boundary from $Q_2$ to $P_2$ in the clockwise orientation.

For a point $P$, if the ray ejecting vertically downwards (upwards) from $P$ crosses the upper (lower) hull of the convex polygon,  we shall say it is above (below) the upper (lower) hull.


\section{Upper Bound on Expected Size of Tukey Layers}
\label{section: upper bound for Tukey layer}

In this section, we prove $\E\left|U_{[t]}(X)\right| = O(kt\log(n/k))$, when the $n$ points of $X$ are sampled independently and uniformly from a convex $k$-gon. Our proof is inspired by \cite{har2011expected} in which Har-Peled considered the expected size of the convex hull of $X$ for $X$ uniformly sampled from a triangle of unit area. He partitions the triangle into $n\times n$ equal-area cells and gives a lower bound on the expected number of cells that are inside the convex hull. Dividing the lower bound by $n^2$ would yield a lower bound on the expected area of the convex hull, denoted by $\E A(X)$. Then by $\E\left|V(X)\right| = n\left[1-\E A(X)\right]$ from \cite{efron1965convex}, an upper bound on the expected size  $\E|V(X)|$ of the convex hull follows. The case where $X$ is uniformly sampled from a convex $k$-gon can be reduced to triangles by partitioning the $k$-gon into $k$ triangles. Before proving our main results, we need the following auxiliary lemma.


\begin{lemma}
\label{lem:distribution}
Given a point $p\in X$, the plane is partitioned into four open quadrants by the horizontal and vertical lines through $p$, as shown in Figure~\ref{fig:shell order}. 
If both the upper-left and upper-right quadrants contain at least $t$ points of $X$, then for any non-vertical line $\ell$ through $p$, there must be at least $t$ points of $X$ above $\ell$. In other words, the point $p$ cannot be above the upper hull of $S_t(X)$.
\end{lemma}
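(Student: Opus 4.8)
The plan is to argue by contradiction from the geometry of the quadrants. Suppose, for the sake of contradiction, that some non-vertical line $\ell$ through $p$ has at most $(t-1)$ points of $X$ strictly above it. Since $\ell$ is non-vertical and passes through $p$, it has a finite slope, so one of the two vertical rays from $p$ — the one going ``upward'' relative to $\ell$ on each side — together with the region above $\ell$ will be the key. Concretely, I would look at where $\ell$ sits relative to the horizontal line through $p$: because $\ell$ passes through $p$, the open upper-left quadrant and the open upper-right quadrant each lie partly above and partly below $\ell$, but crucially at least one of these two quadrants lies \emph{entirely} above $\ell$ on the side where $\ell$ descends. More precisely, if $\ell$ has nonnegative slope, then every point of the upper-left quadrant is above $\ell$; if $\ell$ has nonpositive slope, then every point of the upper-right quadrant is above $\ell$. (A horizontal $\ell$ puts both upper quadrants entirely above it.)

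The core step is therefore this case split on the sign of the slope of $\ell$. In the case of nonnegative slope: a point $(x,y)$ in the open upper-left quadrant has $x < p_x$ and $y > p_y$; the line $\ell$ through $p$ with slope $m \geq 0$ has at height $x$ the value $p_y + m(x - p_x) \leq p_y < y$ since $x - p_x < 0$ and $m \geq 0$. Hence the whole upper-left quadrant is above $\ell$, and by hypothesis it contains at least $t$ points of $X$, contradicting the assumption that at most $(t-1)$ points lie above $\ell$. The case of nonpositive slope is symmetric using the upper-right quadrant. This disposes of all non-vertical $\ell$.

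Having shown that every non-vertical line through $p$ has at least $t$ points of $X$ above it, I then need to translate this into the statement ``$p$ is not above the upper hull of $S_t(X)$,'' which is the conclusion phrased in the language of Section~\ref{sec:upper_lower_hulls}. For this I would invoke the contrapositive: if $p$ \emph{were} strictly above the upper hull of $S_t(X)$, then the vertical downward ray from $p$ crosses the upper hull, and one can find a non-vertical supporting-type line through $p$ that has all of $U_t(X)$ — in fact all points of Tukey depth $\geq t$, which is at least $t$ points counting multiplicities appropriately — weakly below it, hence at most the at most $(t-1)$ shallow points above it; more carefully, being above the upper hull of $S_t(X)$ means $p$ can be separated from $U_t(X)$ by a non-vertical line with $U_t(X)$ below and $p$ above, and rotating/translating this line to pass through $p$ keeps $U_t(X)$ below, and since $U_t(X)\subseteq U_{[t]}(X)$ consists of points that are themselves ``deep,'' the side above such a line can contain at most $t-1$ points. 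That contradicts what we just proved.

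\textbf{Main obstacle.} The slope case-split and the quadrant containment inequalities are entirely routine; the delicate part is the last paragraph — cleanly relating ``at least $t$ points above every non-vertical line through $p$'' to ``$p$ is not above the upper hull of $S_t(X)$.'' The subtlety is that the upper hull of $S_t(X)$ is determined by $U_t(X)$ alone, whereas the Tukey-depth condition quantifies over all of $X$, so I must be careful that a non-vertical line witnessing ``$p$ above the upper hull'' really does force fewer than $t$ points of $X$ above it. I expect this to hinge on the characterization of Tukey depth together with Lemma~\ref{lem:shell position} (the points of $U_{[t]}(X)$ lie outside $S_{t+1}(X)$, and more relevantly, the structure of $U_t(X)$ as the vertices of $S_t(X)$ from Lemma~\ref{lem:convex position}), so that a line above which only the ``outer shells of index $< t$'' can possibly lie indeed has at most $t-1$ points of $X$ above it. Getting this bookkeeping exactly right, rather than the quadrant geometry, is where care is needed.
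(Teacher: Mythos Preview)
Your slope case-split and the inequality $p_y + m(x-p_x)\le p_y < y$ are exactly the paper's argument, just written out in full; the paper says in one line that ``either the upper-left or the upper-right quadrant is completely above $\ell$'' and then invokes the hypothesis. So for the first conclusion of the lemma your approach and the paper's coincide.

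Where you diverge is in the ``in other words'' clause. The paper dispatches it in a single phrase --- ``By Definition~\ref{def::Tukey depth}, we know that $p$ cannot be above the upper hull of the $t$-th Tukey layer'' --- with no separating-line construction. You are right that this step hides something, but the fix you sketch does not work: from ``all of $U_t(X)$ lies strictly below a non-vertical line $\ell'$ through $p$'' you cannot deduce that at most $t-1$ points of $X$ lie above~$\ell'$. The inclusion $U_t(X)\subseteq U_{[t]}(X)$ is the wrong direction for this; what you would need is that every point of $X$ above $\ell'$ has Tukey depth $<t$ \emph{and} that there are at most $t-1$ such points, and neither follows from merely having $U_t(X)$ below. (Concretely, take $X=\{(0,0),(0,-1),(0,-3),(\pm 2,1),(\pm 1,2)\}$, $p=(0,0)$, $t=2$: both upper quadrants of $p$ contain two points, yet $U_2(X)=\{(0,-1)\}$ and $p$ sits directly above $S_2(X)=\{(0,-1)\}$.) So your bookkeeping via $U_t(X)$ alone cannot close the gap; any rigorous version has to go through the structure of the whole depth-$\ge t$ region, not just the $t$-th layer.

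In summary: your proof of the first assertion is correct and essentially identical to the paper's; your third paragraph goes beyond what the paper does and, as written, contains a genuine gap rather than filling one. To match the paper you can simply cite the Tukey-depth definition as it does and stop.
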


\begin{figure}[H]
\centering
\begin{minipage}[t]{0.9\textwidth}
\centering
\begin{tikzpicture}
\fill[gray!20](0,0) -- (2,0) -- (2,2) -- (0,2) -- (0,0);

\fill[black!20](0,0) -- (-2,0) -- (-2,2) -- (0,2) -- (0,0);

\draw node[fill,circle,minimum width=2pt, inner sep=0pt] at (0,0){};

\node at (0,0) [label=below left:$p$]{};

\draw (-2,0) -- (0,0) -- (2,0);

\draw (-1.8, 1.2) -- node[below]{$\ell$}(-0.3,0.2) -- (1.8,-1.2);

\draw (0,-2) -- (0,0) -- (0,2);
\end{tikzpicture}
\caption{The plane is divided into 4 open quadrants by the horizontal and vertical lines through $p$. The upper left and upper right quadrants are marked by dark grey and light grey colour, respectively. Line $\ell$ is an arbitrary non-vertical line through $p$. 
}
\label{fig:shell order}
\end{minipage}
\end{figure}
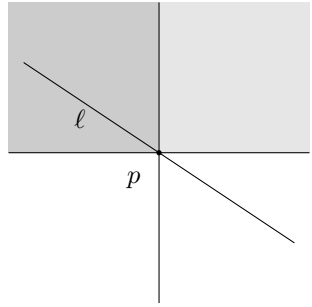

\begin{proof}
For any non-vertical line $\ell$ through $p$, either the upper-left or the upper-right quadrant is completely above $\ell$. Since both quadrants contain at least $t$ points, there are always $t$ points above $\ell$. By Definition~\ref{def::Tukey depth}, we know that $p$ cannot be above the upper hull of the $t$-th Tukey layer.
\end{proof}

Since the points in $X$ are chosen uniformly at random, we may assume that no three points are collinear and no two points have the same $x$ or $y$ coordinate, because such degenerate cases happen with zero probability. We decompose the convex hull into an upper hull and a lower hull, as defined in Section~\ref{sec:upper_lower_hulls}. Lemma~\ref{lem:distribution} implies that 
\begin{multline*}
\Pr\left(\text{$p$ is below the upper hull of $U_t(X)$}\right) \\
\geq \Pr\left(\text{$p$ has at least $t$ points in both upper-left and upper-right quadrants}\right)
\end{multline*}
and similarly
\begin{multline*}
\Pr\left(\text{$p$ is above the lower hull of $U_t(X)$}\right) \\
\geq \Pr\left(\text{$p$ has at least $t$ points in both lower-left and lower-right quadrants}\right).
\end{multline*}
Then we can upper bound $\Pr\left(p\in X\setminus U_{[t]}(X)\right)$ as
\begin{align*}
&\quad \Pr\left(p\in U_{[t]}(X)\right) \\
&= \Pr\left(\text{$p$ is on or above the upper hull of $U_t(X)$}\right) \\
&\qquad\ + \Pr\left(\text{$p$ is on or below the lower hull of $U_t(X)$}\right)\\
&= (1 - \Pr\left(\text{$p$ is below the upper hull of $U_t(X)$}\right)) \\
&\qquad\ + (1 - \Pr\left(\text{$p$ is above the lower hull of $U_t(X)$}\right)),
\end{align*}
whence an upper bound on $\Pr\left(p\in U_{[t]}(X)\right)$ would follow. Multiplying the upper bound by $n$ would finally produce an upper bound on $\E\left|U_{[t]}(X)\right|$.

\begin{theorem}
\label{thm:triangle upper bound}
Let $X$ be a set of $n$ points sampled independently and uniformly from a triangle, then $\E\left|U_{[t-1]}(X)\right|\leq 4t\ln{n} + 4t + 10$.
\end{theorem}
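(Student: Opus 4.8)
The plan is to bound $\Pr(p \in X \setminus U_{[t-1]}(X))$ from below by using the strategy outlined just before the theorem: by Lemma~\ref{lem:distribution}, $p$ lies strictly below the upper hull of $S_t(X)$ (equivalently, $p \notin$ the ``above the upper hull'' part of $U_{[t-1]}(X)$ — note the index shift, since depth $\le t-1$ corresponds to the relevant count being $< t$) provided both the upper-left and upper-right quadrants determined by $p$ contain at least $t$ of the other $n-1$ points, and symmetrically for the lower hull. So I would set up the computation conditioned on the position of $p$ inside the triangle $\mathcal{T}$ (area $1$), write $a_{\mathrm{UL}}, a_{\mathrm{UR}}, a_{\mathrm{LL}}, a_{\mathrm{LR}}$ for the areas of the four quadrant-pieces of $\mathcal{T}$ cut out by the horizontal and vertical lines through $p$, and use a union bound: $\Pr(p \in U_{[t-1]}(X) \mid p)$ is at most the probability that the UL quadrant has fewer than $t$ points, plus the same for UR, LL, LR. Each of these is a binomial tail $\Pr(\mathrm{Bin}(n-1, a) < t)$ with $a$ the corresponding quadrant area.

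The second step is to integrate over the position of $p$. Here I would follow Har-Peled's discretization: overlay an $n \times n$ grid of equal-area ($1/n^2$) cells on the triangle (or, more cleanly, parametrize $p$ by the areas to its left and below). The key estimate is that $\E|U_{[t-1]}(X)| \le n \cdot \E_p[\Pr(p \in U_{[t-1]} \mid p)] \le 4n\,\E_p[\Pr(\mathrm{Bin}(n-1,a_{\mathrm{UL}}(p)) < t)] + \cdots$, and by symmetry of the four terms it suffices to bound one of them. The integral $n\int_{\mathcal{T}} \Pr(\mathrm{Bin}(n-1, a_{\mathrm{UL}}(p)) < t)\,dp$ is then estimated by splitting on whether $a_{\mathrm{UL}}(p)$ is large (say $\ge c t/n$) or small: when $a_{\mathrm{UL}}$ is large the binomial tail is exponentially small (Chernoff), contributing $O(1)$ after multiplying by $n$; when $a_{\mathrm{UL}}$ is small, the region of such $p$ in the triangle has small area, and a change of variables to the area coordinate turns the remaining integral into something like $\int_0^{ct/n} (\text{something})\,\frac{da}{a}$-type expression, producing the $t \ln n$ term. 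The constants $4t\ln n + 4t + 10$ should fall out of being careful with the Chernoff bound and the harmonic-sum / logarithmic integral.

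The main obstacle I anticipate is controlling the geometry of the level sets $\{p : a_{\mathrm{UL}}(p) = a\}$ inside the triangle and executing the change of variables cleanly — unlike the square, where the quadrant areas are simply products $xy$, $(1-x)y$, etc., in a triangle the function $a_{\mathrm{UL}}(p)$ is piecewise a quadratic/rational function of the coordinates of $p$, and one must verify that the ``small quadrant area'' region genuinely has area $O(t/n)$ (up to logs) rather than something larger. I would handle this by affinely mapping the triangle to a standard right triangle (Tukey depth and the combinatorial quadrant counts are affine-invariant, and area ratios are preserved), reducing everything to an explicit $\{(x,y): x,y \ge 0, x+y \le 1\}$ computation where the quadrant areas are elementary. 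A secondary but routine nuisance is the off-by-one bookkeeping between ``at least $t$ points'' in Lemma~\ref{lem:distribution} and the depth-$(t-1)$ statement in the theorem, plus accounting for $p$ itself being one of the $n$ points (so the other points number $n-1$); I would absorb these into the additive constant.
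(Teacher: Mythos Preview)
There is a genuine gap in your approach, and it is precisely the obstacle you anticipated but did not resolve. After affinely mapping to the right triangle $T=\{x,y\ge 0,\ x+y\le 1\}$, consider the upper-right quadrant of a point $p=(x,y)$: its probability mass is $a_{\mathrm{UR}}(p)=(1-x-y)^2=s^2$ with $s:=1-x-y$. The density of $s$ under the uniform law on $T$ is $2(1-s)$, so the ``small quadrant'' region $\{a_{\mathrm{UR}}(p)\le ct/n\}=\{s\le\sqrt{ct/n}\}$ has probability mass of order $\sqrt{t/n}$, not $(t/n)\log(n/t)$. Your union bound therefore makes the UR term contribute $\Theta(\sqrt{tn})$ to $\E|U_{[t-1]}|$, swamping the target $4t\ln n$. (The other three quadrants are fine: for instance $a_{\mathrm{LL}}=2xy$, and $\{2xy\le ct/n\}$ has mass $O((t/n)\log(n/t))$, exactly as in a square.) The underlying reason is that near the hypotenuse the axis-aligned quadrant decomposition is badly adapted to the geometry: the true minimal half-plane through $p$ has mass $\Theta(s)$, not $\Theta(s^2)$, so the event ``UR has fewer than $t$ points'' grossly overestimates ``$p$ lies above the upper hull of $S_t(X)$'' there. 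No single pair of axis directions handles all three sides of the triangle at once, so the affine normalisation you propose does not remove the difficulty.

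The paper sidesteps this by not integrating over a single point at all. It builds an $n\times n$ equal-area grid \emph{adapted to the triangle}: the columns are $n$ subtriangles emanating from one vertex, the rows are cut by lines parallel to the opposite side. It then bounds the expected number of cells lying outside $S_t(X)$ column by column. For column $j$, the controlling quantity is the row index $I_1$ of the $t$-th highest point among all points in columns $1,\dots,j-1$ (and the symmetric $I_2$ on the right); Lemma~\ref{lmm:left_right_upper} shows $\E I_1\le tn/(j-1)+1$ by an order-statistics argument on cell indices. Summing $\E I_1+\E I_2$ over $j$ produces the harmonic sum $\sum_j tn/(j-1)\sim tn\ln n$, hence at most $4tn\ln n+O(tn)$ cells outside $S_t(X)$, i.e.\ $\E A(S_t(X))\ge 1-(4t\ln n+4t+10)/n$; the bound on $\E|U_{[t-1]}(X)|$ then follows from the area--size relation of Lemma~\ref{area and size}. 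The fan-shaped grid is exactly what makes the hypotenuse behave like the two legs and yields $\log n$ rather than $\sqrt{n}$.
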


\begin{figure}[t]
	\centering
	\begin{minipage}[t]{\textwidth}
		\centering
		\begin{tikzpicture}[scale=0.8,every node/.style={scale=0.7}]
		\draw (0,0) -- (1.5,4.5) -- (6,0) -- (0,0);
		
		\draw[dashed] (1.2,0) -- (1.5,4.5);
		
		\draw[dashed] (2.4,0) -- (1.5,4.5);
		
		\draw[dashed] (3.6,0) -- (1.5,4.5);
		
		\draw[dashed] (4.8,0) -- (1.5,4.5);
		
		\draw[dashed] (0.829179608,2.487538821) -- (3.512461179,2.487538821);
		
		\draw[dashed] (0.551316702,1.653950106) -- (4.346049894,1.653950106);
		
		\draw[dashed] (0.338104997,1.014314988) -- (4.985685012,1.014314988);
		
		\draw[dashed] (0.158359214,0.475077641) -- (5.52492236,0.475077641);
		
		\node at (0.6,0) [label=below:col 1]{};
		
		\node at (1.8,0) [label=below:col 2]{};
		
		\node at (3,0) [label=below:col 3]{};
		
		\node at (4.2,0) [label=below:col 4]{};
		
		\node at (5.4,0) [label=below:col 5]{};
		
		\node at (0.079179608,0.237538821) [label=left:row 5]{};
		
		\node at (0.248232105,0.744696315) [label=left:row 4]{};
		
		\node at (0.444710849,1.334132547) [label=left:row 3]{};
		
		\node at (0.690248154,2.070744464) [label=left:row 2]{};
		
		\node at (1.164589803,3.493769411) [label=left:row 1]{};
		
		\node at (0.7,-0.1) [label=above:9]{};
		
		\node at (1.8,-0.1) [label=above:10]{};
		
		\node at (4.1,-0.1) [label=above:19]{};
		
		\node at (5.2,-0.1) [label=above:20]{};
		
		\node at (0.8,0.4) [label=above:7]{};
		
		\node at (1.75,0.4) [label=above:8]{};
		
		\node at (3.8,0.4) [label=above:17]{};
		
		\node at (4.7,0.4) [label=above:18]{};
		
		\node at (0.9,0.95) [label=above:5]{};
		
		\node at (1.7,0.95) [label=above:6]{};
		
		\node at (3.4,0.95) [label=above:15]{};
		
		\node at (4.3,0.95) [label=above:16]{};
		
		\node at (1,1.7) [label=above:3]{};
		
		\node at (1.65,1.7) [label=above:4]{};
		
		\node at (2.95,1.7) [label=above:13]{};
		
		\node at (3.6,1.7) [label=above:14]{};
		
		\node at (1.2,2.6) [label=above:1]{};
		
		\node at (1.6,2.6) [label=above:2]{};
		
		\node at (2.4,2.6) [label=above:11]{};
		
		\node at (2.83,2.6) [label=above:12]{};
		
		\node at (2.95,-0.1) [label=above:25]{};
		
		\node at (2.75,0.4) [label=above:24]{};
		
		\node at (2.55,0.95) [label=above:23]{};
		
		\node at (2.3,1.7) [label=above:22]{};
		
		\node at (2,2.6) [label=above:21]{};
		\end{tikzpicture}
		\caption{Partitioning of a triangle into $n^2$ equal-area cells for $n=5$. The cells are numbered for $j=3$ by Eq.~\eqref{eqn:numbering}.}
		\label{fig:cells}
	\end{minipage}
\end{figure}
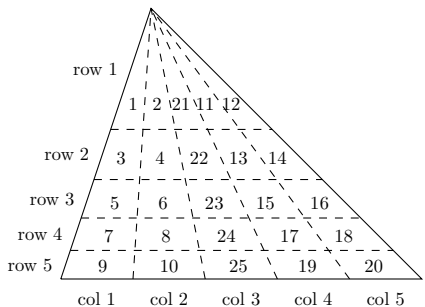

Denote the triangle by $T$ and, without loss of generality, assume that $T$ has area $1$. We partition $T$ into $n$ equal-area triangles by segments emanating from a fixed vertex. Each triangle is further partitioned into one triangle and  $(n-1)$ trapezoids with equal-area by line segments parallel to the opposite side. See Figure~\ref{fig:cells} for an illustration. There are thus  $n^2$ cells in $T$, each has area $1/n^2$. Let $G_{i,j}$ denote the cell in the $i$-th row and $j$-th column. We also define $G_{[i_1,i_2],[j_1,j_2]}=\bigcup_{i'=i_1}^{i_2}\bigcup_{j'=j_1}^{j_2} G_{i',j'}$. 

\begin{proof}[Proof of Theorem~\ref{thm:triangle upper bound}]
We shall count the expected number of cells in each column that are above (resp.\ below) or intersecting the upper (resp.\ lower) hull of $S_t(X)$, the $t$-th Tukey layer. Summing up all those values will lead to an upper bound on the expected number of cells, and thus the expected area, outside $S_t(X)$. Since in an $n\times n$ grid, the boundary of a convex polygon can intersect at most $4n$ cells in total, we only need to count how many cells in the $j$-th column are above the upper hull of $S_t(X)$.

To count the expected number of cells above the upper hull of $S_t(X)$, let $Z_j$ ($1<j<n$) denote the maximum $i$ such that  $G_{ij}$ is above the upper hull of $S_t(X)$ and we shall find an upper bound on $\E[Z_j]$. Let $I_1$ (resp.\ $I_2$) be the row index of the $t$-th point from top to bottom in $G_{[1, n],[1,j-1]}$ (resp.\ $G_{[1, n],[j+1,n]}$). Then for any $G_{i,j}$ with $i>\max(I_1, I_2)$, there must be at least $t$ points in its upper left quadrant and also $t$ points in its upper right quadrant. By Lemma~\ref{lem:distribution}, such a point cannot be above the upper hull of $S_t(X)$. Therefore, $Z_j\leq\max\left(I_{1}, I_{2}\right)\leq I_{1} + I_{2}$ and thus $\E Z_j\leq \E I_{1} + \E I_{2}$. We can prove that $\E I_{1}\leq\frac{tn}{j-1}+1$ and $\E I_{2}\leq\frac{tn}{n-j}+1$ (the proof is postponed to Lemma~\ref{lmm:left_right_upper}), then
\[
\E Z_{j}\leq\E I_{1}+\E I_{2}\leq \frac{tn}{j-1}+\frac{tn}{n-j}+2.
\]

To count the expected number of cells below the lower hull of $S_t(X)$, we analogously define $Z_j'$ to be the maximum $i$ such that $G_{n-i+1,j}$ is below or intersects the lower hull of $S_t(X)$. A similar argument to the above shows the same upper bound on $\E[Z_j']$, that is, 
\[
\E Z_{j}'\leq \frac{tn}{j-1}+\frac{tn}{n-j}+2.
\]
Note that the first and the last column each contains at most $n$ cells outside $S_t(X)$. The expected number of cells in $T$ which are outside $S_{t}(X)$ is therefore at most
\begin{align*}
2n + \sum_{j=2}^{n-1} \left(\E Z_j + \E Z_j'\right) &\leq 2n+2\cdot\sum_{j=2}^{n-1}\left(\frac{tn}{j-1}+\frac{tn}{n-j}+2\right) \\
&\leq 2n + 2\left[2tn\ln(n-2)+ 2tn + 2(n-2) \right] \\
&\leq 4tn\ln{n} + 4tn + 6n,
\end{align*}
together with the at most $4n$ cells that intersect the boundary of the $t$-th Tukey layer $S_t(X)$, when $n\geq 4$. It follows that
\[
\E A(S_{t}(X))\geq 1-\frac{4tn\ln{n}+4tn+ 6n+4n}{n^2}\geq 1-\frac{4t\ln{n} + 4t +10}{n}.
\]
By Lemma~\ref{area and size}, we finally conclude that $\E\left|U_{[t-1]}\right|\leq 4t\ln{n} + 4t + 10$ when $n\geq 4$. When $n < 4$, this bound holds trivially since $\E\left|U_{[t-1]}\right|\leq n$.
\end{proof}

\begin{lemma}
\label{lmm:left_right_upper}
Suppose that $1<j<n$. Let  $I_1$ (resp.\ $I_2$) be the row indices of the $t$-th point from top to bottom in $G_{[1, I_1],[1,j-1]}$ (resp.\ $G_{[1, I_2],[j+1,n]}$), then $\E I_{1}\leq \frac{tn}{j-1}+1$ and $\E I_{2}\leq \frac{tn}{n-j}+1$.
\end{lemma}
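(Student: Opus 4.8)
\emph{Proof plan.} By the symmetry between the column ranges $[1,j-1]$ and $[j+1,n]$, it suffices to prove the bound on $\E I_1$; the bound on $\E I_2$ will follow by the identical argument with $j-1$ replaced by $n-j$ (both are positive since $1<j<n$). I adopt the convention that $I_1=n$ when the leftmost $j-1$ columns contain fewer than $t$ points, so that $I_1$ is always a well-defined row index in $\{1,\dots,n\}$ (this is harmless, since in the application $Z_j\le n$ regardless). Write $A:=G_{[1,n],[1,j-1]}$ for the sub-triangle of $T$ spanned by the leftmost $j-1$ columns; it has area $(j-1)/n$, and for every $m$ the region $G_{[1,m],[1,j-1]}$ is exactly the union of the topmost $m$ (equal-area) rows of $A$, hence has area $m(j-1)/n^{2}$.

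The first step is to express $\E I_1$ through binomial lower tails. Since $I_1\ge i$ precisely when the topmost $i-1$ rows of $A$ contain at most $t-1$ sample points, writing $M_m$ for the number of sample points in $G_{[1,m],[1,j-1]}$ — a $\mathrm{Bin}\!\left(n,m(j-1)/n^{2}\right)$ variable — gives
\[
\E I_1=\sum_{i\ge 1}\Pr(I_1\ge i)=\sum_{m=0}^{n-1}\Pr\!\left(M_m\le t-1\right).
\]
The second step is to dominate this sum by an integral. I would interpolate the $M_m$ to a monotone family $\{M(w)\}_{w\in[0,n]}$, where $M(w)$ counts the sample points lying in the top $w/n$ fraction of $A$ by area — a region of area $w(j-1)/n^{2}$ — so that $M(w)\sim\mathrm{Bin}\!\left(n,w(j-1)/n^{2}\right)$, $M(m)=M_m$, and $w\mapsto\Pr(M(w)\le t-1)$ is non-increasing. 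Bounding the $m=0$ term by $1$ and each later term $\Pr(M_m\le t-1)$ by $\int_{m-1}^{m}\Pr(M(w)\le t-1)\,dw$, then substituting $u=w(j-1)/n^{2}$, yields
\[
\E I_1\le 1+\int_{0}^{n}\Pr\!\left(\mathrm{Bin}\!\left(n,\tfrac{w(j-1)}{n^{2}}\right)\le t-1\right)dw
=1+\frac{n^{2}}{j-1}\int_{0}^{(j-1)/n}\Pr\!\left(\mathrm{Bin}(n,u)\le t-1\right)du.
\]

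The last step evaluates the integral via the classical Beta identity $\int_{0}^{1}\binom{n}{k}u^{k}(1-u)^{n-k}\,du=\tfrac{1}{n+1}$, which summed over $k=0,\dots,t-1$ gives $\int_{0}^{1}\Pr(\mathrm{Bin}(n,u)\le t-1)\,du=\tfrac{t}{n+1}$. As the integrand is non-negative and $(j-1)/n\le 1$, the integral over $[0,(j-1)/n]$ is at most $t/(n+1)$, so, using $n^{2}\le n(n+1)$,
\[
\E I_1\le 1+\frac{n^{2}}{j-1}\cdot\frac{t}{n+1}\le \frac{tn}{j-1}+1,
\]
and the bound on $\E I_2$ is the same with $j-1$ replaced by $n-j$.

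The only point needing care is the geometric bookkeeping underlying the first two steps: one must verify that ``the topmost $m$ rows of the leftmost $j-1$ columns'' really is a region of area $m(j-1)/n^{2}$ that grows monotonically in $m$, and, more generally, that the ``top-$w/n$-by-area'' sub-regions of $A$ are nested with the stated areas, so that $\{M(w)\}$ is a genuine monotone interpolation of $\{M_m\}$. This follows from the fact that the rows of a triangle (cut off by lines parallel to the base) have equal area and are nested from the apex; once it is granted, the remaining estimates are the short computations above.
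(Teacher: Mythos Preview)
Your argument is correct. Both proofs ultimately rest on the same order-statistic fact $\int_0^1 \Pr(\mathrm{Bin}(n,u)\le t-1)\,du = t/(n+1)$, but reach it by different routes. The paper numbers all $n^2$ cells so that the leftmost $j-1$ columns come first in row-major order; then the cell index $h$ of the $t$-th sample point is literally the $t$-th order statistic of $n$ i.i.d.\ uniform draws from $\{1,\dots,n^2\}$, whence $\E h\le tn$, and the numbering gives $I_1\le\lceil h/(j-1)\rceil$. Your approach instead writes $\E I_1$ as a tail sum of binomial lower-tail probabilities, interpolates the discrete family $\{M_m\}$ to a monotone continuous family $\{M(w)\}$, and evaluates the resulting integral directly via the Beta identity. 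The paper's cell-relabelling trick is slicker and avoids the interpolation bookkeeping, while your tail-sum approach is more transparently analytic and does not require inventing the numbering scheme; either way the final estimate is the same.
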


\begin{proof}
We prove $\E I_{1}\leq \frac{tn}{j-1}+1$ below, and a similar argument will give $\E I_{2}\leq \frac{tn}{n-j}+1$. We number the $n^2$ cells from $1$ to $n^2$ as follows. For a cell $G_{i,\ell}$, we define its number 
\begin{equation}\label{eqn:numbering}
	idx(G_{i,\ell}) = \begin{cases}
				(j-1)(i-1) + \ell, & \ell < j;\\
				(j-1)n + (n-j)(i-1) + \ell, & \ell > j;\\
				(n-1)n+i, & \ell = j.
			   \end{cases}	
\end{equation}
See Figure~\ref{fig:cells} for an illustration. Intuitively, the triangle is split into three parts, left to the $j$-th column, right to the $j$-the column and the $j$-th column. In each part the cells are numbered one by one from left to right and from top to bottom; overall, the left part precedes the right part and the right part precedes the $j$-th column. 

Now, we can refer to each cell by its number and denote the cells by $G_1,\dots,G_{n^2}$, abusing the notation. Since all cells have the same area, a uniform random point in the triangle $T$ can be generated by first choosing an integer in $m\in \{1,\dots,n^2\}$ uniformly at random and then generating a uniform random point in $G_m$. Also we denote by $|G_m|$ the number of points in $X$ that are contained in $G_m$.

Let $h$ be the integer such that $\sum_{i=1}^{h-1} |G_i| < t$ and $\sum_{i=1}^{h} |G_i| \geq t$. This is exactly the $t$-th smallest integer among $n$ uniform samples from $\{1,\dots,n^2\}$. Let $f_t(x)$ be the density function of the $t$-th smallest value among $n$ independent uniform points in $[0,1]$. Then
\begin{align*}
\E h = \int_0^1 \lceil x n^2\rceil f_t(x) dx \leq \int_0^1 (xn^2+1)f_t(x)dx &= n^2\int_0^1 xf_t(x)dx + 1 \\
&= n^2\frac{t}{n+1} + 1 \\
&\leq t(n-1) + 1\\
&\leq tn.
\end{align*}
Here we used the fact that $\int_0^1 x f_t(x) dx = \frac{t}{n+1}$. The integral is the expected value of the $t$-th smallest value among $n$ independent uniform points in $[0,1]$, and it is a classic result that this expected value is exactly $t/(n+1)$ (see, e.g., \cite[Lemma 8.3]{mitzenmacher2017probability}).

When $h\leq n(j-1)$, we have $I_1=\lceil h / (j-1)\rceil$. When $h > n(j-1)$, it automatically holds that $I_1 \leq n \leq h/(j-1)$. In both cases, we have $I_1 \leq \lceil h / (j-1)\rceil$. Therefore,
\[
\E I_1 \leq \E \left\lceil \frac{h}{j-1}\right\rceil \leq \frac{\E h}{j-1}+1 \leq \frac{tn}{j-1}+1.\qedhere
\]
\end{proof}

\begin{theorem}
\label{thm:upper bound}
Let $X$ be a set of $n$ points sampled independently and uniformly from a convex $k$-gon. Then we have $\E\left|U_{[t-1]}(X)\right|\leq 4tk\ln(n/k)+4tk+10k$.
\end{theorem}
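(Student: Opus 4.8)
The plan is to reduce the $k$-gon case to the triangle case already handled in Theorem~\ref{thm:triangle upper bound} by triangulating the polygon and applying Corollary~\ref{coro:shell t-part}. Fix a point $o$ inside the convex $k$-gon $\mathcal{P}$ (say its centroid, or any interior point) and connect it to the $k$ vertices, partitioning $\mathcal{P}$ into $k$ triangles $T_1,\dots,T_k$ of total area $1$. For each $i$, let $X_i = X \cap T_i$ (degenerate overlaps on the shared edges occur with probability zero and can be ignored). By Corollary~\ref{coro:shell t-part}, $U_{[t-1]}(X) \subseteq \bigcup_{i=1}^k U_{[t-1]}(X_i)$, so
\[
\E\left|U_{[t-1]}(X)\right| \leq \sum_{i=1}^k \E\left|U_{[t-1]}(X_i)\right|.
\]

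The subtlety is that $X_i$ is \emph{not} a set of $n$ uniform points in $T_i$: both the number of points landing in $T_i$ and the areas of the $T_i$ are random/unequal, so Theorem~\ref{thm:triangle upper bound} does not apply verbatim. First I would condition on $N_i := |X_i|$; given $N_i$, the points of $X_i$ are i.i.d.\ uniform on $T_i$, and Theorem~\ref{thm:triangle upper bound} (which is scale-invariant, hence valid for a triangle of any area) gives $\E\left[\left|U_{[t-1]}(X_i)\right| \mid N_i\right] \leq 4t\ln N_i + 4t + 10$, with the convention that this bound also dominates the trivial bound $N_i$ when $N_i$ is small. Taking expectations and using that $N_i \sim \mathrm{Bin}(n, a_i)$ where $a_i = \mathrm{area}(T_i)$, concavity of $\ln$ and Jensen's inequality give $\E[\ln N_i] \leq \ln \E[N_i] = \ln(n a_i)$. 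Hence
\[
\E\left|U_{[t-1]}(X_i)\right| \leq 4t\ln(n a_i) + 4t + 10.
\]

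Summing over $i$ and applying Jensen's inequality once more to $\sum_i \ln(n a_i) = \sum_i \ln(n a_i)$ with $\sum_i a_i = 1$: since $\ln$ is concave, $\frac{1}{k}\sum_i \ln(n a_i) \leq \ln\!\left(\frac{1}{k}\sum_i n a_i\right) = \ln(n/k)$, so $\sum_i \ln(n a_i) \leq k\ln(n/k)$. This yields
\[
\E\left|U_{[t-1]}(X)\right| \leq 4tk\ln(n/k) + 4tk + 10k,
\]
as claimed. The main obstacle I anticipate is the bookkeeping around small $N_i$: when $na_i$ is small (or the triangle receives very few points) the logarithm could go negative, so the inequality $\E\left[\left|U_{[t-1]}(X_i)\right| \mid N_i\right] \leq 4t\ln N_i + 4t+10$ must be stated carefully — one should use $\max(0, 4t\ln N_i) + 4t + 10 \geq N_i \wedge (\text{true value})$, or equivalently invoke the ``$n<4$'' clause of Theorem~\ref{thm:triangle upper bound} per triangle — and then check that the Jensen step still goes through after this truncation (it does, since replacing $\ln N_i$ by $\max(0,\ln N_i)$ only increases the bound, and $x \mapsto \max(0,\ln x)$ is still concave on $[1,\infty)$ and constant below, so Jensen in the needed direction is unaffected, or one just bounds $\max(0,\ln N_i) \le \ln(N_i+1) \le \ln(na_i+1)$ and absorbs the $+1$). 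Everything else is routine.
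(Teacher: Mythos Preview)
Your proposal is correct and follows essentially the same approach as the paper: triangulate the $k$-gon, apply Corollary~\ref{coro:shell t-part} and Theorem~\ref{thm:triangle upper bound} per triangle, then use concavity of $\ln$ to combine. The paper is marginally more direct, conditioning on the full vector $(n_1,\dots,n_k)$ and invoking $\sum_i \ln n_i \leq k\ln(n/k)$ immediately from $\sum_i n_i = n$, which bypasses your intermediate step through the areas $a_i$ and the per-triangle Jensen bound $\E[\ln N_i]\leq \ln(na_i)$.
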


\begin{proof}
Partition the convex $k$-gon into $k$ triangles. Let $X_{1}, X_2,\dots,X_k$ be the set of points of $X$ in the triangles and $n_i = |X_i|$ for $i=1,\dots, k$. Note that $n_{1}, n_2, \dots, n_k$ are random numbers subject to $\sum_{i=1}^{k}n_{i}=n$. It follows from Corollary~\ref{coro:shell t-part} that
\begin{align*}
\E\left[U_{[t-1]}(X)|n_{1}, n_{2}, ..., n_{k}\right] \leq \sum_{i=1}^{k}\E\left[U_{[t-1]}(X_{i})|n_{i}\right] & \leq \sum_{i=1}^{k}(4t\ln{n_i} + 4t+10)\\
& = 4t\sum_{i=1}^{k}\ln{n_i} + 4tk + 10k\\
& \leq 4tk\ln(n/k)+4tk+10k.\ \qedhere\\
\end{align*}
\end{proof}

\begin{figure}[t]
	\centering
	\begin{minipage}[h]{0.45\textwidth}
		\centering
		\begin{tikzpicture}
		\draw (0,0) -- (4,0) -- (0,4) -- (0,0);
		
		\draw (4/3,4/3) -- (0,2);
		
		\draw (4/3,4/3) -- (2,0);
		
		\draw (4/3,4/3) -- (2,2);
		
		\node at (0.5,0.5) {$R_1$};
		
		\node at (2.5,0.5) {$R_2$};
		
		\node at (0.5,2.5) {$R_3$};
		
		\node at (0,0) [label=below left:{$(0,0)$}]{};
		
		\node at (4,0) [label=below:{$(1,0)$}]{};
		
		\node at (2,0) [label=below:{$\left(\frac{1}{2},0\right)$}]{};
		
		\node at (0,2) [label=left:{$\left(0,\frac{1}{2}\right)$}]{};
		
		\node at (2,2) [label=right:{$\left(\frac{1}{2},\frac{1}{2}\right)$}]{};
		
		\node at (0,4) [label=left:{$(0,1)$}]{};
		
		\node at (4/3,4/3) [label=above:{$\left(\frac{1}{3},\frac{1}{3}\right)$}]{};
		\end{tikzpicture}
		\caption{The triangle is divided into three parts, by connecting the centroid to the midpoint of each edge.}
		\label{fig:partition}
	\end{minipage}
	\hfill
	\begin{minipage}[h]{0.45\textwidth}
		\centering
		\begin{tikzpicture}
		\draw node[fill,circle,minimum width=2pt, inner sep=0pt] at (1,1){};
		
		\node at (1,1) [label=above right:$p$]{};
		
		\draw (0,0) -- (4,0) -- (0,4) -- (0,0);
		
		\draw[dotted,thick] (-0.25,1) -- (3.25,1);
		
		\draw[dotted,thick] (1,-0.25) -- (1,3.25);
		
		\node at (0,0) [label=below left:{$(0,0)$}]{};
		
		\node at (4,0) [label=below:{$(1,0)$}]{};
		
		\node at (0,4) [label=left:{$(0,1)$}]{};
		\end{tikzpicture}
		\caption{By the horizontal line and the vertical line through a given point $p$, the triangle is divided into four quadrants.}
		\label{fig:quadrants}
	\end{minipage}
\end{figure}

\section{Upper Bound on Expected Size of Convex Layers}
\label{section: upper bound for convex layer}
In this section, we shall prove an upper bound $O\left(kt^{3}\log{\frac{n}{kt^2}}\right)$ on $\mathbb{E}\left|V_{[t]}(X)\right|$, when $X$ is sampled uniformly from a convex $k$-gon. The proof is inspired by \cite{dwyer1988convex} and \cite{he2018maximal}. We first consider the case where the points in $X$ are sampled uniformly from a triangle $T$ and obtain an upper bound $O\left(kt^{3}\log{\frac{n}{kt^2}}\right)$, which, by Corollary~\ref{coro:layer t-part}, implies an upper bound $O\left(kt^{3}\log{\frac{n}{kt^2}}\right)$ when $X$ is sampled from a $k$-gon. The problem can be further reduced to finding an upper bound on the probability $\Pr\left(p\in V_{[t]}(X)\right)$ for a single point $p\in X$, which, multiplied by $n$, will be an upper bound on $\mathbb{E}\left|V_{[t]}(X)\right|$. 

\begin{theorem}
\label{thm:convex layer upper bound}
Let $X$ be a set of n points sampled independently and uniformly from a triangle $T$, then $\mathbb{E}\left|V_{[t]}(X)\right| = O\left(t^3 \log(n/t^2)\right)$.
\end{theorem}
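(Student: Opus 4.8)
The plan is to follow the same scaffolding as in the Tukey-layer case (Theorem~\ref{thm:triangle upper bound}), but with the extra loss caused by convex depth being more fragile than Tukey depth. Partition the unit-area triangle $T$ into $n^2$ equal-area cells $G_{i,j}$ exactly as in Figure~\ref{fig:cells}, and fix a point $p\in X$. I want an upper bound on $\Pr(p\in V_{[t]}(X))$; multiplying by $n$ gives the bound on $\E|V_{[t]}(X)|$. The key point is a quantitative strengthening of the observation driving Lemma~\ref{lem:distribution}: if a cell $G_{i,j}$ has ``many'' points — roughly $\Theta(t^2)$ rather than $t$ — in \emph{both} its upper-left and its upper-right quadrant, then $p$ cannot be within the first $t$ convex layers (it will be buried under $t$ nested hulls on the upper side, and symmetrically on the lower side). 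To see this, note that a quadrant completely above a non-vertical line $\ell$ through $p$ that itself contains $m$ points contributes $\Omega(\sqrt{m})$ points to each of the first $\Omega(\sqrt m)$ convex layers of that quadrant's point set; choosing $m=\Theta(t^2)$ forces $p$ to lie on the $(t+1)$-st layer or deeper. Combined with Lemma~\ref{lem:layer divide} (and its Corollary~\ref{coro:layer t-part}) to pass from quadrant subsets back to all of $X$, this yields the cell-counting criterion.

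With that criterion in hand, the rest mirrors the Tukey proof. For each column $j$ with $1<j<n$, let $Z_j$ be the largest row index $i$ such that $G_{i,j}$ is above or intersects the upper hull of the $t$-th convex layer $H_t(X)$, and similarly $Z_j'$ on the lower side. Let $I_1$ (resp.\ $I_2$) now be the row index of the $\Theta(t^2)$-th point from the top inside the columns strictly left (resp.\ right) of column $j$. The same argument as in Lemma~\ref{lmm:left_right_upper}, with $t$ replaced by $c t^2$, gives $\E I_1 \leq \frac{ct^2 n}{j-1}+1$ and $\E I_2\leq \frac{ct^2 n}{n-j}+1$, hence $\E Z_j\leq \frac{ct^2 n}{j-1}+\frac{ct^2 n}{n-j}+2$, and likewise for $\E Z_j'$. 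Summing over $j$ from $2$ to $n-1$, adding the $O(n)$ cells in the two extreme columns and the at most $4n$ cells met by the boundary of $H_t(X)$, the expected number of cells outside $H_t(X)$ is $O(t^2 n\log n)$. Dividing by $n^2$ gives $\E A(H_t(X))\geq 1-O(t^2\log n / n)$, so $\E|V_t(X)| = n(1-\E A(H_t(X))) = O(t^2\log n)$ per layer, and summing over the first $t$ layers gives $O(t^3\log n)$. A slightly more careful bookkeeping — observing that the relevant cell counts only need to resolve down to scale $\Theta(t^2/n)$ in area, equivalently one may coarsen to an $(n/t)\times(n/t)$-type grid — upgrades the $\log n$ to $\log(n/t^2)$, yielding $\E|V_{[t]}(X)| = O(t^3\log(n/t^2))$.

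The main obstacle I expect is making the ``$\Theta(t^2)$ points in a quadrant above $\ell$ force depth $>t$'' step fully rigorous and with the right constant. One has to argue that convex depth is essentially monotone under adding points in a half-plane on one side of $p$: concretely, if a set $Y$ lies entirely above every line through $p$ in some fixed range of slopes and $|Y|\geq \binom{t+1}{2}$ or so, then $p$ is on the $(t+1)$-st or deeper convex layer of $\{p\}\cup Y$. This is a clean lemma about convex layers of a point set plus one boundary point — each of the first few convex layers of $Y$ must ``wrap around'' and contribute a vertex that sits between $p$ and $Y$ from $p$'s viewpoint — and is where the quadratic (rather than linear) blow-up enters, explaining the $t^3$ versus $t$ discrepancy between this theorem and Theorem~\ref{thm:triangle upper bound}. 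One must also be careful that ``above the upper hull of $H_t(X)$'' is handled correctly when $\ell$ does or does not cross the interior of $H_t(X)$, exactly as in the case analysis of Lemma~\ref{lem:shell position}; Lemma~\ref{lem:dahal} (convex depth changes by at most one when a point is inserted) is the natural tool for controlling these boundary situations. Everything else — the cell partition, the order-statistics computation $\int_0^1 x f_t(x)\,dx = t/(n+1)$, and the $4n$-cell bound on a convex boundary in a grid — carries over verbatim from the Tukey-layer proof.
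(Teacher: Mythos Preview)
Your central geometric claim is false, and this is not a matter of missing constants.  You assert that if a set $Y$ lies entirely in a half-plane bounded by a line through $p$ and $|Y|\gtrsim\binom{t+1}{2}$, then $p$ lies on the $(t{+}1)$-st or deeper convex layer of $\{p\}\cup Y$.  But if $Y$ is entirely above the horizontal line through $p$, then $p$ is the unique lowest point of $\{p\}\cup Y$ and hence a vertex of its convex hull: the convex depth of $p$ is $1$, no matter how large $|Y|$ is or how many internal convex layers $Y$ itself has.  The four-quadrant version fails for the same reason: place $4m$ points at the vertices of a convex $4m$-gon centered at $p$; each open quadrant receives $m$ points, all $4m$ are in convex position, and $p$ has convex depth $2$.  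So $\Theta(t^2)$ points per quadrant does \emph{not} force $p\notin V_{[t]}(X)$ deterministically.  Invoking Dalal's $\Theta(\sqrt m)$ expected-depth result for a quadrant's own point set does not repair this, because those layers do not enclose $p$; and your appeal to Corollary~\ref{coro:layer t-part} is empty here, since $p$ sits at the corner of every quadrant and is therefore trivially in $V_1$ of each quadrant subset.

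The paper closes this gap by abandoning the fixed global grid and using a \emph{point-centered} partition.  For each $p$ it splits $T$ into four quadrants at $p$ and then subdivides each quadrant so that there are exactly $t$ ``diagonal'' cells per quadrant, arranged so that choosing one point from the level-$i$ diagonal cell in each quadrant yields a convex quadrilateral containing $p$, and these quadrilaterals are nested across $i=1,\dots,t$.  Non-emptiness of all $4t$ diagonal cells is thus a \emph{deterministic} certificate that $p\notin V_{[t]}(X)$ (Lemma~\ref{lem:diag}, with Lemma~\ref{lem:dahal} disposing of the extra points).  Each diagonal cell has probability mass $\Theta(p_1p_2/t^2)$, so the chance some cell is empty is at most $4t\exp(-\Omega(np_1p_2/t^2))$; integrating against the density of $p_1p_2$ on $R_1$ gives $\Pr(p\in V_{[t]}(X))=O\!\left(\frac{t^3}{n}\log\frac{n}{t^2}\right)$ directly, with no per-layer summation.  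Incidentally, your bookkeeping at the end is also off: the quantity $n(1-\E A(H_t(X)))$ controls $\E|V_{[t-1]}(X)|$, not a single layer $\E|V_t(X)|$, so if your column-counting argument had worked it would have yielded $\E|V_{[t]}(X)|=O(t^2\log n)$ outright---stronger than the theorem you are trying to prove---which should itself have signalled that the geometric step was too optimistic.
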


\begin{proof}
As the combinatorial properties of convex hulls are affine invariant, we may assume the vertices of $T$ are $(0,0)$, $(1,0)$ and $(0,1)$. We partition $T$ into three regions $R_1, R_2, R_3$ with equal area by connecting the centroid $\left(\frac{1}{3}, \frac{1}{3}\right)$ to the midpoint of each edge (see Figure~\ref{fig:partition}). Then
$\Pr\left(p\in V_{[t]}(X)|p\in R_i\right)$ are all equal for $i=1,2,3$ and so
\begin{align*}
\Pr\left(p\in V_{[t]}(X)\right) &=\sum_{i=1}^{3}\Pr\left(p\in V_{[t]}(X)|p\in R_i\right) \Pr(p\in R_{i})\\
&=\sum_{i=1}^{3}\Pr\left(p\in V_{[t]}(X)|p\in R_i\right)\cdot\frac{1}{3}\\
&= \Pr\left(p\in V_{[t]}(X)|p\in R_1\right).
\end{align*}

\begin{figure}[t]
\centering
\begin{minipage}[h]{0.24\textwidth}
\centering
\begin{tikzpicture}
\fill[black!20](0,0) -- (1/4,0) -- (1/4,1/4) -- (0,1/4) -- (0,0);

\fill[black!20](1/4,1/4) -- (1/4,1/2) -- (1/2,1/2) -- (1/2,1/4) -- (1/4,1/4);

\fill[black!20](1/2,1/2) -- (3/4,1/2) -- (3/4,3/4) -- (1/2,3/4) -- (1/2,1/2);

\fill[black!20](3/4,3/4) -- (1,3/4) -- (1,1) -- (3/4,1) -- (3/4,3/4);

\draw (0,0) -- (1,0) -- (1,1) -- (0,1) -- (0,0);

\draw[dashed] (0,1/4) -- (1,1/4);

\draw[dashed] (0,2/4) -- (1,2/4);

\draw[dashed] (0,3/4) -- (1,3/4);

\draw[dashed] (1/4,0) -- (1/4,1);

\draw[dashed] (2/4,0) -- (2/4,1);

\draw[dashed] (3/4,0) -- (3/4,1);
\end{tikzpicture}
\end{minipage}
\begin{minipage}[h]{0.24\textwidth}
\centering
\begin{tikzpicture}
\fill[black!20](1,0.708712153) -- (1.572821962,0.708712153) -- (3/2,1) -- (1,1) -- (1,0.708712153);

\fill[black!20](1.572821962,0.708712153) -- (2.145643924,0.708712153) -- (2.274754878,0.450490243) -- (1.637377439,0.45049024) -- (1.572821962,0.708712153);

\fill[black!20](2.274754878,0.450490243) -- (2.912132318,0.450490243) -- (3.087911636,0.216117819) -- (2.391941091,0.216117819) -- (2.274754878,0.450490243);

\fill[black!20](3.087911636,0.216117819) -- (3.783882181,0.216117819) -- (4,0) -- (13/4,0) -- (3.087911636,0.216117819);

\draw (1,0) -- (1,1) -- (3,1) -- (4,0) -- (1,0);

\draw[dashed] (1,0.708712153) -- (3.291287847,0.708712153);

\draw[dashed] (1,0.450490243) -- (3.549509757,0.450490243);

\draw[dashed] (1,0.216117819) -- (3.783882181,0.216117819);

\draw[dashed] (1,0.216117819) -- (3.783882181,0.216117819);

\draw[dashed] (3/2,1) -- (7/4,0);

\draw[dashed] (2,1) -- (5/2,0);

\draw[dashed] (5/2,1) -- (13/4,0);
\end{tikzpicture}
\end{minipage}
\begin{minipage}[h]{0.24\textwidth}
\centering
\begin{tikzpicture}
\fill[black!20](1,1) -- (9/7,1) -- (9/7,9/7) -- (1,9/7) -- (1,1);

\fill[black!20](9/7,9/7) -- (11/7,9/7) -- (11/7,11/7) -- (9/7,11/7) -- (9/7,9/7);

\fill[black!20](11/7,11/7) -- (11/7,13/7) -- (13/7,13/7) -- (13/7,11/7) -- (11/7,11/7);

\fill[black!20](13/7,13/7) -- (13/7,15/7) -- (15/7,13/7) -- (13/7,13/7);

\draw (1,1) -- (3,1) -- (1,3) -- (1,1);

\draw[dashed] (9/7,1) -- (9/7,19/7);

\draw[dashed] (11/7,1) -- (11/7,17/7);

\draw[dashed] (13/7,1) -- (13/7,15/7);

\draw[dashed] (15/7,1) -- (15/7,13/7);

\draw[dashed] (17/7,1) -- (17/7,11/7);

\draw[dashed] (19/7,1) -- (19/7,9/7);

\draw[dashed] (1,9/7) -- (19/7,9/7);

\draw[dashed] (1,11/7) -- (17/7,11/7);

\draw[dashed] (1,13/7) -- (15/7,13/7);

\draw[dashed] (1,15/7) -- (13/7,15/7);

\draw[dashed] (1,17/7) -- (11/7,17/7);

\draw[dashed] (1,19/7) -- (9/7,19/7);
\end{tikzpicture}
\end{minipage}
\begin{minipage}[h]{0.24\textwidth}
\centering
\begin{tikzpicture}
\fill[black!20](0.708712153,1) -- (0.708712153,1.572821962) -- (1,3/2) -- (1,1) -- (0.708712153,1);

\fill[black!20](0.708712153,1.572821962) -- (0.708712153,2.145643924) -- (0.450490243,2.274754878) -- (0.45049024,1.637377439) -- (0.708712153,1.572821962);

\fill[black!20](0.450490243,2.274754878) -- (0.450490243,2.912132318) -- (0.216117819,3.087911636) -- (0.216117819,2.391941091) -- (0.450490243,2.274754878);

\fill[black!20](0.216117819,3.087911636) -- (0.216117819,3.783882181) -- (0,4) -- (0,13/4) -- (0.216117819,3.087911636);

\draw (0,1) -- (0,4) -- (1,3) -- (1,1) -- (0,1);

\draw[dashed] (0.708712153,1) -- (0.708712153,3.291287847);

\draw[dashed] (0.450490243,1) -- (0.450490243,3.549509757);

\draw[dashed] (0.216117819,1) -- (0.216117819,3.783882181);

\draw[dashed] (0.216117819,1) -- (0.216117819,3.783882181);

\draw[dashed] (1,3/2) -- (0,7/4);

\draw[dashed] (1,2) -- (0,5/2);

\draw[dashed] (1,5/2) -- (0,13/4);
\end{tikzpicture}
\end{minipage}
\caption{Partition of each quadrant of the triangle into cells when $t=4$. In each single quadrant, the cells have the equal area. There are exactly $t$ diagonal cells in each quadrant, marked in grey colour.}
\label{fig:quadrant cell}
\end{figure}
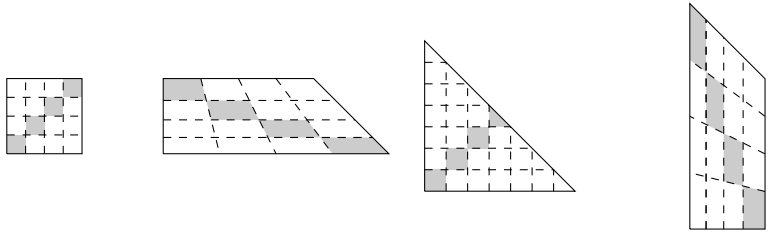

We turn to find an upper bound on $ \Pr\left(p\in V_{[t]}(X)|p\in R_1\right)$. For this purpose, the triangle $T$ is divided into four quadrants by a vertical and a horizontal line through $p$ as shown in Figure~\ref{fig:quadrants}. Each quadrant is further partitioned into multiple cells as in Figure~\ref{fig:quadrant cell}. The triangular quadrant is partitioned into $(2t+1)t$ cells by $(2t-1)$ equally spaced horizontal lines and another $(2t-1)$ equally spaced vertical lines. Each of the other three quadrants are partitioned into $t^2$ equal-area cells. This construction ensures exactly $t$ diagonal cells in each of the four quadrants. 

We claim that if $p \in V_{[t]}(X)$, then at least one of the $4t$ diagonal cells must be empty. The proof of this claim is deferred to Lemma~\ref{lem:diag}. By this observation, the probability of $p \in V_{[t]}(X)$ is at most the probability that at least one of the $4t$ diagonal cells is empty, which we upper bound as follows. Let $(p_1, p_2)$ denote the coordinates of $p$. When $p\in R_1$, the area of each quadrant is at least $\frac{1}{2}p_1p_2$ by \cite[Section 2]{dwyer1988convex} and the probability mass (with respect to the uniform distribution on $T$) of each quadrant is at least $p_1p_2$. Therefore each diagonal cell has probability mass at least $\frac{p_1p_2}{4t^2}$, and the expected number of points in every single cell is at least $\frac{np_1p_2}{4t^2}$. By the multiplicative form of Chernoff bound \cite[Theorem 4.5]{mitzenmacher2017probability}, the probability that a diagonal cell is empty is at most $\exp\left(-\frac{np_{1}p_{2}}{16t^2}\right)$. Further by a union bound, the probability that at least one of the $4t$ diagonal cells is empty in triangle $T$ is at most $4t\exp\left(-\frac{np_{1}p_{2}}{16t^2}\right)$. Therefore,
\[
\Pr\left(p\in V_{[t]}(X)|p_{1}p_{2}=y, p\in R_{1}\right)\leq 4te^{-\frac{ny}{16t^2}},
\]
whence we can show that
\[
\Pr\left(p\in V_{[t]}(X)|p\in R_1\right)\leq 12t \int_{0}^{1/9} e^{-\frac{ny}{16t^2}} \log{\frac{1}{y}}\dd y = 12t \cdot O\left(\frac{t^2}{n}\log{\frac{n}{t^2}}\right),
\]
whose proof is postponed to Lemma~\ref{eq:upper_1} and Lemma~\ref{eq:upper_2}. 
It follows that $\Pr\left(p \in V_{[t]}(X)\right)=O\left(\frac{t^3}{n}\log{\frac{n}{t^2}}\right)$ for any $p \in X$ and, finally, that $\mathbb{E}\left|V_{[t]}(X)\right| = O\left(t^3\log{\frac{n}{t^2}}\right)$.
\end{proof}

Now we are ready to prove the following main theorem.
\begin{theorem}
\label{thm:upper bound 2}
Let $X$ be a set of $n$ points sampled independently and uniformly from a convex $k$-gon, then we have $\mathbb{E}\left|V_{[t]}(X)\right|=O\left(kt^3\log\frac{n}{kt^2}\right)$.
\end{theorem}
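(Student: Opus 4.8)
The plan is to reduce the $k$-gon case to the triangle case established in Theorem~\ref{thm:convex layer upper bound}, in exactly the same way that Theorem~\ref{thm:upper bound} was deduced from Theorem~\ref{thm:triangle upper bound}. First I would triangulate the convex $k$-gon into $k$ triangles $T_1,\dots,T_k$ (e.g.\ by fanning out from one vertex), and let $X_i = X \cap T_i$ with $n_i = |X_i|$, so that $\sum_{i=1}^k n_i = n$ and the tuple $(n_1,\dots,n_k)$ is a random composition of $n$. Conditioned on $(n_1,\dots,n_k)$, each $X_i$ is a set of $n_i$ points drawn independently and uniformly from $T_i$, so Theorem~\ref{thm:convex layer upper bound} gives $\E\big[|V_{[t]}(X_i)| \,\big|\, n_i\big] = O\!\left(t^3 \log(n_i/t^2)\right)$, with an absolute constant hidden in the $O(\cdot)$ that does not depend on $i$.

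Next I would invoke Corollary~\ref{coro:layer t-part}, which gives the containment $V_{[t]}(X) \subseteq \bigcup_{i=1}^k V_{[t]}(X_i)$, hence $|V_{[t]}(X)| \le \sum_{i=1}^k |V_{[t]}(X_i)|$ pointwise. Taking conditional expectation and summing,
\[
\E\big[|V_{[t]}(X)| \,\big|\, n_1,\dots,n_k\big] \;\le\; \sum_{i=1}^k \E\big[|V_{[t]}(X_i)| \,\big|\, n_i\big] \;\le\; C \sum_{i=1}^k t^3 \log\!\left(\frac{n_i}{t^2}\right)
\]
for an absolute constant $C$. Then I would use concavity of $\log$ and Jensen's inequality: since $\sum_i n_i = n$, we have $\sum_{i=1}^k \log(n_i/t^2) = k \cdot \frac{1}{k}\sum_i \log(n_i/t^2) \le k \log\!\big(\frac{1}{k}\sum_i n_i / t^2\big) = k\log\!\big(\frac{n}{kt^2}\big)$. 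This yields $\E\big[|V_{[t]}(X)| \,\big|\, n_1,\dots,n_k\big] \le C k t^3 \log\!\big(\frac{n}{kt^2}\big)$, a bound that is already deterministic in the $n_i$'s, so taking the outer expectation over $(n_1,\dots,n_k)$ leaves it unchanged and gives $\E|V_{[t]}(X)| = O\!\left(k t^3 \log\frac{n}{kt^2}\right)$.

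The only real subtlety—and the place I would be careful—is that the logarithm can be negative or the argument can be small when some $n_i$ is tiny (or even zero), which would make the bound $O(t^3\log(n_i/t^2))$ meaningless as stated. This is handled exactly as in the proof of Theorem~\ref{thm:upper bound}: for any triangle, $|V_{[t]}(X_i)| \le n_i$ trivially, so whenever $n_i$ is below a constant multiple of $t^2$ we use the trivial bound $n_i = O(t^2) = O(t^3)$, absorbing such terms into the $O(kt^3)$ slack; summing these over the at most $k$ triangles contributes only $O(kt^3)$, which is dominated by (or additively combined with) the main term. Thus the final bound should more precisely be read as $O\!\left(kt^3 \log\frac{n}{kt^2} + kt^3\right)$, which is $O\!\left(kt^3\log\frac{n}{kt^2}\right)$ in the regime $n = \Omega(kt^2)$ where the statement is nontrivial, and holds trivially (as $\E|V_{[t]}(X)| \le n$) otherwise. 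Apart from this bookkeeping, the argument is a direct transcription of the triangle-to-$k$-gon reduction already used in Section~\ref{section: upper bound for Tukey layer}.
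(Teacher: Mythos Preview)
Your proposal is correct and follows essentially the same approach as the paper: triangulate the $k$-gon, apply Corollary~\ref{coro:layer t-part} together with the triangle bound of Theorem~\ref{thm:convex layer upper bound}, and combine via concavity of $\log$ (the paper phrases this step as the AM-GM inequality, which is equivalent). You are in fact more careful than the paper about the edge case of small $n_i$; the paper glosses over this bookkeeping.
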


\begin{proof}
As in the proof of Theorem~\ref{thm:upper bound}, we partition the $k$-gon into $k$ triangles. Let $n_1, n_2, \dots, n_k$ denote the number of points in each triangle. It follows from Corollary~\ref{coro:layer t-part} that
\begin{align*}
\E\left[V_{[t]}(X)|n_{1}, n_{2}, \dots, n_{k}\right]\leq\sum_{i=1}^{k}\E\left[V_{[t]}(X_{i})|n_{i}\right]&\leq\sum_{i=1}^{k}O\left(t^3\log{\frac{n_i}{t^2}}\right)\\
&=O\left(kt^3\log{\frac{n}{kt^2}}\right),
\end{align*}
where we used the AM-GM inequality and the fact that $\sum_{i=1}^k n_i = n$ in the last step.
\end{proof}

In the rest of this section, we state and prove those lemmata used in the proof of Theorem~\ref{thm:convex layer upper bound}. We denote the density and the cumulative distribution functions of the product $p_1\cdot p_2$ by \(\rho_{p_{1}p_{2}}(\cdot)\) and \(F_{p_{1}p_{2}}(\cdot)\), respectively.

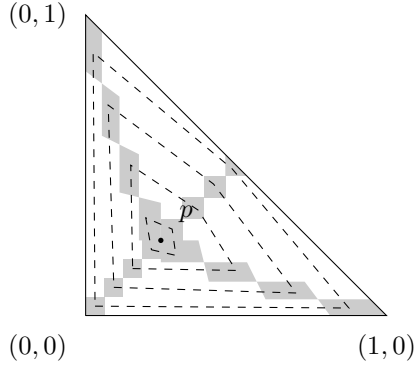
\begin{figure}[!tb]
\centering
\begin{minipage}[h]{0.9\textwidth}
\centering
\begin{tikzpicture}
\fill[black!20] (0,0) -- (1/4,0) -- (1/4,1/4) -- (0,1/4) -- (0,0);

\fill[black!20](1/4,1/4) -- (1/4,1/2) -- (1/2,1/2) -- (1/2,1/4) -- (1/4,1/4);

\fill[black!20](1/2,1/2) -- (3/4,1/2) -- (3/4,3/4) -- (1/2,3/4) -- (1/2,1/2);

\fill[black!20](0,0) -- (1/4,0) -- (1/4,1/4) -- (0,1/4) -- (0,0);

\fill[black!20](3/4,3/4) -- (1,3/4) -- (1,1) -- (3/4,1) -- (3/4,3/4);

\fill[black!20](1,0.708712153) -- (1.572821962,0.708712153) -- (3/2,1) -- (1,1) -- (1,0.708712153);

\fill[black!20](1.572821962,0.708712153) -- (2.145643924,0.708712153) -- (2.274754878,0.450490243) -- (1.637377439,0.45049024) -- (1.572821962,0.708712153);

\fill[black!20](2.274754878,0.450490243) -- (2.912132318,0.450490243) -- (3.087911636,0.216117819) -- (2.391941091,0.216117819) -- (2.274754878,0.450490243);

\fill[black!20](3.087911636,0.216117819) -- (3.783882181,0.216117819) -- (4,0) -- (13/4,0) -- (3.087911636,0.216117819);

\fill[black!20](1,1) -- (9/7,1) -- (9/7,9/7) -- (1,9/7) -- (1,1);

\fill[black!20](9/7,9/7) -- (11/7,9/7) -- (11/7,11/7) -- (9/7,11/7) -- (9/7,9/7);

\fill[black!20](11/7,11/7) -- (11/7,13/7) -- (13/7,13/7) -- (13/7,11/7) -- (11/7,11/7);

\fill[black!20](13/7,13/7) -- (13/7,15/7) -- (15/7,13/7) -- (13/7,13/7);

\fill[black!20](0.708712153,1) -- (0.708712153,1.572821962) -- (1,3/2) -- (1,1) -- (0.708712153,1);

\fill[black!20](0.708712153,1.572821962) -- (0.708712153,2.145643924) -- (0.450490243,2.274754878) -- (0.45049024,1.637377439) -- (0.708712153,1.572821962);

\fill[black!20](0.450490243,2.274754878) -- (0.450490243,2.912132318) -- (0.216117819,3.087911636) -- (0.216117819,2.391941091) -- (0.450490243,2.274754878);

\fill[black!20](0.216117819,3.087911636) -- (0.216117819,3.783882181) -- (0,4) -- (0,13/4) -- (0.216117819,3.087911636);

\draw node[fill,circle,minimum width=2pt, inner sep=0pt] at (1,1){};

\node at (1,1) [label=above right:$p$]{};

\draw (0,0) -- (4,0) -- (0,4) -- (0,0);

\node at (0,0) [label=below left:{$(0,0)$}]{};

\node at (4,0) [label=below:{$(1,0)$}]{};

\node at (0,4) [label=left:{$(0,1)$}]{};

\draw[dashed] (1/8,1/8) -- (3.5,0.1) -- (1.95,1.95) -- (0.1,3.5) -- (1/8,1/8);

\draw[dashed] (3/8,3/8) -- (2.8,0.3) -- (1.7,1.75) -- (0.3,2.8) -- (3/8,3/8);

\draw[dashed] (5/8,5/8) -- (2,0.6) -- (1.5,1.4) -- (0.6,2) -- (5/8,5/8);

\draw[dashed] (7/8,7/8) -- (1.2,0.8) -- (1.15, 1.15) -- (0.8,1.3) -- (7/8,7/8);
\end{tikzpicture}
\caption{The diagonal cells are shaded. Connecting one point in the diagonal cell of the same order in each quadrant forms a convex layer, marked by a dashed polyline.}
\label{fig:convex layers}
\end{minipage}
\end{figure}

\begin{lemma}
\label{lem:diag}
If $p \in V_{[t]}(X)$, there must be at least one empty diagonal cell. 
\end{lemma}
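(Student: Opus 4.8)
The plan is to prove the contrapositive: \emph{if all $4t$ diagonal cells are non‑empty, then $p$ has convex depth larger than $t$}, i.e.\ $p\notin V_{[t]}(X)$. Recall that, after conditioning on $p\in R_1$, each of the four quadrants at $p$ carries exactly $t$ diagonal cells; index them $j=1,\dots,t$, with $j=t$ the cell closest to $p$ and $j=1$ the one closest to $\partial T$ (so the innermost dashed quadrilateral in Figure~\ref{fig:convex layers} corresponds to $j=t$). Assuming every diagonal cell is non‑empty, pick a point $q_{i,j}\in X$ lying in the $j$-th diagonal cell of the $i$-th quadrant, $i\in\{1,2,3,4\}$, and for each $j$ put $Q^{(j)}:=\mathrm{conv}\{q_{1,j},q_{2,j},q_{3,j},q_{4,j}\}$.

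Two geometric facts drive the argument. First, since $p\in R_1$ the four quadrant diagonals (towards the three vertices of $T$ and towards the hypotenuse) are non‑degenerate, and for every $j$ the four points $q_{1,j},\dots,q_{4,j}$ lie one in each of the four open quadrants cut out by the horizontal and vertical lines through $p$; hence $p\in\mathrm{int}\,Q^{(j)}$, in particular $p\in\mathrm{int}\,Q^{(t)}$. Second, the cells are arranged so that for any $m<j$ \emph{every} point of the $j$-th diagonal cell of \emph{any} quadrant lies in $\mathrm{int}\,Q^{(m)}$, no matter where inside their own cells the four points $q_{1,m},\dots,q_{4,m}$ happen to fall; consequently $Q^{(1)}\supseteq Q^{(2)}\supseteq\dots\supseteq Q^{(t)}\ni p$, with the points $q_{i,j}$, $j>m$, strictly interior to $Q^{(m)}$. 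Establishing this second fact from the explicit cell coordinates is the part that needs care, and I expect it to be the main obstacle in the write‑up: the delicate case is when $q_{i,j}$ and $q_{i,m}$ with $j=m+1$ are both close to the edge shared by consecutive diagonal cells, which is ruled out using the genericity assumptions (the random points fall in the open cells and no two share a coordinate) together with the fact that $p$ is bounded away from $\partial T$ because $p\in R_1$, so that all four quadrants have comparable size.

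Granting these two facts, the conclusion follows by peeling convex layers one at a time. Write $X_m:=X\setminus V_{[m]}(X)$, so that $\mathrm{conv}(X_m)=H_{m+1}(X)$ and $V_r(X_m)=V_{m+r}(X)$ for all $r\ge 1$. I claim, by induction on $m\ge 0$, that $X_m$ still contains $p$ together with all $q_{i,j}$ with $j\ge m+1$; the base case $m=0$ is immediate. For the step, $X_{m-1}$ contains $q_{1,m},\dots,q_{4,m}$, so $\mathrm{conv}(X_{m-1})\supseteq Q^{(m)}$; by the second fact each $q_{i,j}$ with $j\ge m+1$ lies in $\mathrm{int}\,Q^{(m)}\subseteq\mathrm{int}\,\mathrm{conv}(X_{m-1})$ and is therefore not a vertex of $\mathrm{conv}(X_{m-1})=H_m(X)$, hence survives into $X_m$; and since $m\le t$ we have $Q^{(m)}\supseteq Q^{(t)}$, so $p\in\mathrm{int}\,Q^{(m)}\subseteq\mathrm{int}\,\mathrm{conv}(X_{m-1})$, whence $p\notin V_m(X)$ and $p\in X_m$. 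Running the induction up to $m=t$ gives $p\notin V_m(X)$ for every $m\le t$, i.e.\ $p\notin V_{[t]}(X)$, which is exactly the desired contrapositive. Thus the combinatorial onion‑peeling step is routine once the planar geometry of the cell decomposition is nailed down, and the whole difficulty of the lemma is concentrated in verifying that one representative from each of the $t$ diagonal cells per quadrant always produces $t$ properly nested quadrilaterals around $p$.
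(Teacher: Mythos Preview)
Your approach is essentially the paper's: argue the contrapositive by choosing one point $q_{i,j}$ from each of the $4t$ diagonal cells and observing that these give $t$ nested convex quadrilaterals around $p$, so that $p$ has convex depth larger than $t$. The one structural difference is how you pass from the $4t$ selected points to all of $X$. The paper first notes that in the $(4t{+}1)$-point subset $\{p\}\cup\{q_{i,j}\}$ the point $p$ already has convex depth at least $t{+}1$, and then invokes Lemma~\ref{lem:dahal} (adding points never decreases convex depth) to reach the same conclusion in $X$. You instead peel $X$ directly, using at stage $m$ that the surviving $q_{1,m},\dots,q_{4,m}$ force $\mathrm{conv}(X_{m-1})\supseteq Q^{(m)}$; this is equivalent and simply trades the citation of Lemma~\ref{lem:dahal} for a short explicit induction.

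Both arguments rest on the same geometric claim, namely that one representative per diagonal cell always produces $t$ properly nested quadrilaterals containing $p$. The paper asserts this by reference to Figure~\ref{fig:convex layers} with no further justification; you rightly flag it as the crux. Your proposed justification, however, is slightly off target: consecutive diagonal cells in a given quadrant meet only at a corner, not along an edge, and whether the $(m{+}1)$-st cells lie inside $Q^{(m)}$ is a deterministic statement about the cell boundaries---genericity of the sample and the condition $p\in R_1$ play no role in settling it. If you want to close this gap rather than match the paper's level of rigour, you should verify directly from the cell coordinates that every point of the $(m{+}1)$-st diagonal cell in each quadrant lies on the $p$-side of every line through a point of the $m$-th diagonal cell of one quadrant and a point of the $m$-th diagonal cell of an adjacent quadrant.
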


\begin{proof}
If none of the $4t$ diagonal cells is empty, we can construct $t$ convex layers enclosing $p$, where each layer consists of four points from the diagonal cells, one from each quadrant (see Figure~\ref{fig:convex layers}). The convex depth of $p$ is thus at least $(t+1)$. Although there may be more than one point in each diagonal cell, we know from Lemma~\ref{lem:dahal} that the convex depth of $p$ cannot decrease after those additional points are included. This contradicts the assumption that $p \in V_{[t]}(X)$. 
Therefore, some diagonal cell must be empty.
\end{proof}

\begin{lemma}[{\cite[Theorem 1]{dwyer1988convex}}]
\label{lem:prob}
$F_{p_{1}p_{2}}\left(y |p \in R_1\right) \leq  3 F_{p_{1}p_{2}}\left(y| p \in [0,1] \times [0,1]\right)$.
\end{lemma}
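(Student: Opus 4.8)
I would prove this in the spirit of \cite[Theorem~1]{dwyer1988convex}, by turning the statement about conditional laws into an elementary comparison of planar areas. Conditioned on $p\in R_1$ the point $p$ is uniform on $R_1$, and conditioned on $p\in[0,1]^2$ it is uniform on the unit square; hence, writing $\mathrm{area}(\cdot)$ for Lebesgue measure, for every $y>0$
\[
F_{p_1p_2}(y\mid p\in R_1)=\frac{\mathrm{area}\bigl(\{x_1x_2\le y\}\cap R_1\bigr)}{\mathrm{area}(R_1)},\qquad
F_{p_1p_2}\bigl(y\mid p\in[0,1]^2\bigr)=\mathrm{area}\bigl(\{x_1x_2\le y\}\cap[0,1]^2\bigr).
\]
So it suffices to compare the sub-level sets of the product map $(x_1,x_2)\mapsto x_1x_2$ inside $R_1$ and inside the square, keeping track of the normalising factor $1/\mathrm{area}(R_1)$.

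First I would compute the square's side exactly: for $0<y\le1$ the region $\{x_1x_2\le y\}\cap[0,1]^2$ has height $\min(1,y/x_1)$ above each $x_1\in(0,1]$, so $\mathrm{area}(\{x_1x_2\le y\}\cap[0,1]^2)=\int_0^y 1\,\dd x_1+\int_y^1 (y/x_1)\,\dd x_1 = y\bigl(1+\ln(1/y)\bigr)$. This is increasing in $y$, vanishes as $y\to0^+$, and has derivative $\ln(1/y)$ — which is exactly the density $\rho_{p_1p_2}(\cdot\mid p\in[0,1]^2)$ appearing in the integration immediately after this lemma, which is why the cumulative-distribution form is the convenient one.

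Then I would bound the left side by inclusion. The geometric input is that $R_1$ is a small quadrilateral hugging a vertex of the triangle: after the standard affine normalisation placing the triangle so that $R_1\subseteq[0,1]^2$ (equivalently, $R_1=\{x_1,x_2\ge0,\ 2x_1+x_2\le1,\ x_1+2x_2\le1\}$ up to scaling), monotonicity of area under inclusion gives $\mathrm{area}(\{x_1x_2\le y\}\cap R_1)\le\mathrm{area}(\{x_1x_2\le y\}\cap[0,1]^2)$ for every $y$, hence
\[
F_{p_1p_2}(y\mid p\in R_1)\le\frac{1}{\mathrm{area}(R_1)}\,F_{p_1p_2}\bigl(y\mid p\in[0,1]^2\bigr).
\]
Since $R_1$ is one of three equal-area pieces of the triangle, $\mathrm{area}(R_1)=\tfrac13\,\mathrm{area}(T)$; fixing the normalisation so that $\mathrm{area}(T)\ge1$ while $R_1\subseteq[0,1]^2$ gives $\mathrm{area}(R_1)\ge\tfrac13$ and the claimed factor $3$. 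No separate argument is needed for large $y$: on $R_1$ one has $x_1x_2\le\tfrac14(x_1+x_2)^2$, which is bounded by a constant (attained at the centroid), so above that constant the left side is identically $1$ while the right side is $3$ times an increasing function that already exceeds $\tfrac13$ there.

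The main obstacle is bookkeeping this constant, and it is also why a careless version of the argument fails. A bare inclusion $R_1\subseteq[0,1]^2$ controls only the numerator area, and one must ensure that the normalising factor $1/\mathrm{area}(R_1)$ is compatible with the scale at which $R_1$ is embedded in the unit square: a tighter bounding box such as $[0,\tfrac12]^2$ sharpens the numerator to $y(1+\ln\tfrac1{4y})$ but worsens the normalisation by exactly the compensating factor, and for small $y$ the pinching of $R_1$ by its two oblique edges is only a lower-order effect that does not help. I would therefore fix the affine scale exactly as in \cite[Theorem~1]{dwyer1988convex} and carry out the area accounting there.
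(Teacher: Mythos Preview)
The paper does not prove this lemma at all; it is quoted verbatim from Dwyer~\cite{dwyer1988convex}. So there is no ``paper's proof'' to compare against, and any self-contained argument you give is already more than the paper offers.

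That said, your inclusion argument as written does \emph{not} yield the constant $3$ under the coordinates fixed in the paper. There the triangle has vertices $(0,0),(1,0),(0,1)$, so $\mathrm{area}(T)=\tfrac12$ and $\mathrm{area}(R_1)=\tfrac16$. Plain inclusion $R_1\subseteq[0,1]^2$ therefore gives
\[
F_{p_1p_2}(y\mid p\in R_1)\;\le\;\frac{1}{\mathrm{area}(R_1)}\,F_{p_1p_2}\bigl(y\mid p\in[0,1]^2\bigr)\;=\;6\,F_{p_1p_2}\bigl(y\mid p\in[0,1]^2\bigr),
\]
a factor $6$, not $3$. Your proposed fix --- rescale so that $\mathrm{area}(T)\ge1$ while keeping $R_1\subseteq[0,1]^2$ --- does not close this gap: rescaling $R_1$ by $c$ replaces the left-hand side by $F_{p_1p_2}(y/c^2\mid p\in R_1)$ while the right-hand side (which refers to the fixed square $[0,1]^2$) is unchanged. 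With $c^2=2$ you obtain $F_{p_1p_2}(y/2\mid R_1)\le 3\,F_{p_1p_2}(y\mid[0,1]^2)$, which is a different (weaker) statement than the lemma. In short, the ratio of the two CDFs is scale-invariant, so no rescaling can turn a factor $6$ into a factor $3$.

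Two remarks. First, none of this matters for Theorem~\ref{thm:convex layer upper bound}: the constant feeds only into the $12t$ in Lemma~\ref{eq:upper_1}, and any absolute constant gives the same $O(t^3\log(n/t^2))$ bound. Your inclusion argument with factor $6$ is a perfectly valid replacement for the citation. Second, the true ratio is actually below $2$ for small $y$ (a direct computation gives $\mathrm{area}(\{x_1x_2\le y\}\cap R_1)\sim y(1+\ln\tfrac1{4y})$, hence ratio $\to 2^-$), so the stated constant $3$ is correct --- but obtaining it needs a sharper area estimate than bare inclusion, which is presumably what Dwyer's original argument supplies.
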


\begin{lemma}[{\cite[section I.8]{Feller71}}]
\label{lem:feller}
$ \rho_{p_{1}p_{2}}\left(y| p \in [0,1] \times [0,1]\right)=\log(1/y).$
\end{lemma}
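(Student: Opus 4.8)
The plan is to derive the density $\rho_{p_1p_2}$ directly from the cumulative distribution function of the product $Y=p_1p_2$, where $p=(p_1,p_2)$ is uniform on $[0,1]^2$ (equivalently, $p_1$ and $p_2$ are independent and uniform on $[0,1]$). Since the claim is the classical distribution-of-a-product fact attributed to Feller, I do not anticipate any real difficulty; the only thing requiring attention is the case split in the defining integral.

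First I would condition on $p_1$ and write, for $y\in(0,1)$,
\[
F_{p_1p_2}(y\mid p\in[0,1]^2)=\Pr(p_1p_2\le y)=\int_0^1 \Pr\!\left(p_2\le \frac{y}{p_1}\right)\dd p_1 .
\]
The integrand equals $1$ when $p_1\le y$ (because then $y/p_1\ge 1$) and equals $y/p_1$ when $p_1>y$. Splitting the integral at $p_1=y$ gives
\[
F_{p_1p_2}(y\mid p\in[0,1]^2)=\int_0^y 1\,\dd p_1+\int_y^1\frac{y}{p_1}\,\dd p_1=y-y\log y .
\]
Differentiating in $y$ then yields
\[
\rho_{p_1p_2}(y\mid p\in[0,1]^2)=\frac{\dd}{\dd y}\bigl(y-y\log y\bigr)=1-(\log y+1)=-\log y=\log\frac{1}{y},
\]
for $y\in(0,1)$, which is exactly the assertion (here $\log$ is the natural logarithm, as forced by the requirement $\int_0^1\log(1/y)\dd y=1$).

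An equally short alternative would be to pass to logarithms: $-\log p_1$ and $-\log p_2$ are independent $\mathrm{Exp}(1)$ variables, so $-\log Y$ has the $\mathrm{Gamma}(2,1)$ density $x e^{-x}$ on $(0,\infty)$, and the change of variables $Y=e^{-X}$ recovers $\rho_{p_1p_2}(y)=-\log y$. Either way, the main (mild) point of care is getting the region $\{p_1\le y\}$ right in the case analysis and noting that $F_{p_1p_2}$ is $C^1$ on $(0,1)$, so that the density is well defined there; since the identity is also available verbatim from \cite[section I.8]{Feller71}, one may alternatively just cite it.
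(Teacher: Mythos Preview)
Your derivation is correct and entirely standard. The paper itself does not prove this lemma at all; it merely cites \cite[section I.8]{Feller71}, so there is no in-paper argument to compare against, and your direct CDF computation (or the exponential/gamma alternative) is exactly what one would supply if a proof were wanted.
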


\begin{lemma}
\label{eq:upper_1}
If $\Pr\left(p\in V_{[t]}(X)|p_{1}p_{2}=y, p\in R_{1}\right)\leq 4te^{-\frac{ny}{16t^2}}$, then
\[\Pr\left(p\in V_{[t]}(X)|p\in R_1\right) \leq 12t\int_{0}^{1/9} e^{-\frac{ny}{16t^2 }} \log{\frac{1}{y}}\dd y.\]
\end{lemma}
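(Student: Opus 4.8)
The plan is to condition on the value of the product $p_{1}p_{2}$, where $(p_{1},p_{2})$ denotes the coordinates of $p$. First I would determine the range of $p_{1}p_{2}$ over $R_{1}$. Since the map $(x_{1},x_{2})\mapsto x_{1}x_{2}$ has nonvanishing gradient in the interior of $R_{1}$, its maximum over the quadrilateral $R_{1}$ with vertices $(0,0)$, $(\tfrac12,0)$, $(\tfrac13,\tfrac13)$, $(0,\tfrac12)$ is attained on the boundary; a quick look at the four edges (the function vanishes on the two axis edges and, on each of the other two, is increasing toward the centroid) shows this maximum equals $\tfrac19$, attained at the centroid. Hence the conditional density $\rho_{p_{1}p_{2}}(\cdot\,|\,p\in R_{1})$ is supported on $[0,\tfrac19]$, and
\[
\Pr\left(p\in V_{[t]}(X)\,|\,p\in R_{1}\right)=\int_{0}^{1/9}\Pr\left(p\in V_{[t]}(X)\,|\,p_{1}p_{2}=y,\ p\in R_{1}\right)\rho_{p_{1}p_{2}}(y\,|\,p\in R_{1})\dd y .
\]
Applying the hypothesis $\Pr(p\in V_{[t]}(X)\,|\,p_{1}p_{2}=y,\ p\in R_{1})\le 4te^{-ny/(16t^{2})}$ then bounds the right-hand side by $\int_{0}^{1/9}4te^{-ny/(16t^{2})}\rho_{p_{1}p_{2}}(y\,|\,p\in R_{1})\dd y$.

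Next I would trade the density $\rho_{p_{1}p_{2}}(\cdot\,|\,p\in R_{1})$ for $3\log(1/y)$. By Lemma~\ref{lem:feller}, $\rho_{p_{1}p_{2}}(y\,|\,p\in[0,1]^{2})=\log(1/y)$, so its cumulative distribution function is $F_{p_{1}p_{2}}(y\,|\,p\in[0,1]^{2})=y\bigl(1+\log(1/y)\bigr)$ on $(0,1)$; in particular $F_{p_{1}p_{2}}(\tfrac19\,|\,p\in[0,1]^{2})=\tfrac19(1+\log 9)>\tfrac13$, using $\log 9>2$. Writing $h(y):=4te^{-ny/(16t^{2})}$, which is nonnegative and strictly decreasing, integration by parts gives
\[
\int_{0}^{1/9}h(y)\rho_{p_{1}p_{2}}(y\,|\,p\in R_{1})\dd y=h(\tfrac19)F_{p_{1}p_{2}}(\tfrac19\,|\,p\in R_{1})-\int_{0}^{1/9}h'(y)F_{p_{1}p_{2}}(y\,|\,p\in R_{1})\dd y .
\]
Since $F_{p_{1}p_{2}}(\tfrac19\,|\,p\in R_{1})=1\le 3F_{p_{1}p_{2}}(\tfrac19\,|\,p\in[0,1]^{2})$, since $-h'\ge 0$, and since Lemma~\ref{lem:prob} gives $F_{p_{1}p_{2}}(y\,|\,p\in R_{1})\le 3F_{p_{1}p_{2}}(y\,|\,p\in[0,1]^{2})$ for every $y$, the right-hand side is at most $3$ times the analogous expression with $[0,1]^{2}$ replacing $R_{1}$; running the same integration by parts backwards, that expression equals $\int_{0}^{1/9}h(y)\log(1/y)\dd y$. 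Altogether $\int_{0}^{1/9}h(y)\rho_{p_{1}p_{2}}(y\,|\,p\in R_{1})\dd y\le 3\int_{0}^{1/9}h(y)\log(1/y)\dd y=12t\int_{0}^{1/9}e^{-ny/(16t^{2})}\log(1/y)\dd y$, which is the claimed bound.

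The hard part is the passage from Lemma~\ref{lem:prob} to the bound on the integral. Lemma~\ref{lem:prob} only controls cumulative distribution functions, and one cannot in general dominate $\rho_{p_{1}p_{2}}(\cdot\,|\,p\in R_{1})$ pointwise by $3\log(1/y)$; what makes the cumulative bound usable is the monotonicity of the weight $h$ (exploited through the integration by parts above, equivalently a layer-cake/Fubini rearrangement), together with the elementary estimate $\int_{0}^{1/9}\log(1/y)\dd y=\tfrac19(1+\log 9)\ge\tfrac13$, which absorbs the boundary term and lets the upper limit of integration stay at $\tfrac19$ rather than degrading to $1$. Everything else is a routine computation, and the resulting integral is estimated in Lemma~\ref{eq:upper_2}.
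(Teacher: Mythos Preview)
Your proof is correct and follows the same approach as the paper's. The paper writes the passage from $\int_{0}^{1/9} h\,\mathrm{d}F_{p_{1}p_{2}}(\cdot\,|\,R_{1})$ to $3\int_{0}^{1/9} h\,\mathrm{d}F_{p_{1}p_{2}}(\cdot\,|\,[0,1]^{2})$ as an immediate consequence of Lemma~\ref{lem:prob} without comment; your integration-by-parts argument, exploiting the monotonicity of $h(y)=4te^{-ny/(16t^{2})}$, supplies the justification that the paper leaves implicit (and your boundary check $F_{p_{1}p_{2}}(1/9\,|\,[0,1]^{2})\ge 1/3$ is in fact already contained in Lemma~\ref{lem:prob} at $y=1/9$, so it is consistent rather than an extra hypothesis).
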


\begin{proof}
It is easy to prove that $p_{1}p_{2}$ reaches its maximum value $\frac{1}{9}$ at $\left(\frac{1}{3}, \frac{1}{3}\right)$ for $p\in H_1$. Then we have
\begin{align*}
\Pr\left(p\in V_{[t]}(X)|p\in R_{1}\right) &= \int_{0}^{1/9}\Pr\left(p\in V_{[t]}(X)|p_{1}p_{2}=y, p\in R_{1}\right)\cdot \rho_{p_{1}p_{2}}\left(y|p \in R_1\right)\dd y \\
&\leq 4t\int_{0}^{1/9} e^{-\frac{ny}{16t^2}}\cdot \rho_{ p_{1}p_{2}} \left(y |p\in R_1\right)\dd y   \quad \\
&= 4t\int_{0}^{1/9} e^{-\frac{ny}{16t^2}}\dd  F_{p_{1}p_{2}} \left(y|p\in R_1\right).
\end{align*}
By Lemma~\ref{lem:prob} and Lemma~\ref{lem:feller},
\begin{align*}
\int_{0}^{1/9} e^{-\frac{ny}{16t^2}}\dd  F_{p_{1}p_{2}} \left(y|p\in R_1\right) &\leq 3\int_{0}^{1/9} e^{-\frac{ny}{16t^2}}\dd  F_{p_{1}p_{2}} \left(y|p\in [0,1] \times [0,1] \right) \\
&= 3\int_{0}^{1/9} e^{-\frac{ny}{16t^2}}\log{\frac{1}{y}} \dd y.
\end{align*}
thus 
\[\Pr\left(p\in V_{[t]}(X)|p\in R_{1}\right) \leq 12t\int_{0}^{1/9} e^{-\frac{ny}{16t^2}}\log{\frac{1}{y}} \dd y.\qedhere\]
\end{proof}

\begin{lemma}
\label{eq:upper_2}
$\int_{0}^{1/9} e^{-\frac{ny}{16t^2}}\log{\frac{1}{y}}\dd y = O \left( \frac{t^2}{n} \log \frac{n}{t^2} \right)$.
\end{lemma}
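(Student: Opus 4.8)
The plan is to collapse this integral to a Gamma-type integral by the substitution $u = ny/(16t^{2})$ and then estimate it using only two elementary facts: $\int_{0}^{\infty} e^{-u}\,\dd u = 1$ and $c_{0} := \int_{0}^{\infty} e^{-u}\abs{\log u}\,\dd u < \infty$, the latter being finite because $e^{-u}\abs{\log u}$ is integrable at the origin and decays exponentially at infinity (its precise value, related to the Euler--Mascheroni constant, is irrelevant). Write $\lambda := n/(16t^{2})$; throughout we may assume $\lambda \geq 1$, which is the regime relevant to Theorem~\ref{thm:convex layer upper bound} --- for $n = O(t^{2})$ the right-hand side $\frac{t^{2}}{n}\log\frac{n}{t^{2}}$ is $O(1)$ and nothing needs proving, and for $n \leq t^{2}$ it is nonpositive.

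The first step is to write the integral as $\int_{0}^{1/9} e^{-\lambda y}\log\frac{1}{y}\,\dd y$ and change variables via $u = \lambda y$, turning it into
\[
\int_{0}^{1/9} e^{-\lambda y}\log\frac{1}{y}\,\dd y
= \frac{1}{\lambda}\int_{0}^{\lambda/9} e^{-u}\bigl(\log\lambda - \log u\bigr)\,\dd u .
\]
Since $\log\lambda - \log u \leq \log\lambda + \abs{\log u}$ pointwise and the dominating function $e^{-u}(\log\lambda + \abs{\log u})$ is nonnegative, enlarging the interval of integration to $(0,\infty)$ only increases the value, giving
\[
\int_{0}^{1/9} e^{-\lambda y}\log\frac{1}{y}\,\dd y
\leq \frac{1}{\lambda}\int_{0}^{\infty} e^{-u}\bigl(\log\lambda + \abs{\log u}\bigr)\,\dd u
= \frac{\log\lambda + c_{0}}{\lambda}.
\]
Substituting back $\lambda = n/(16t^{2})$ yields
\[
\int_{0}^{1/9} e^{-\frac{ny}{16t^{2}}}\log\frac{1}{y}\,\dd y
\;\leq\; \frac{16t^{2}}{n}\left(\log\frac{n}{16t^{2}} + c_{0}\right)
\;=\; O\!\left(\frac{t^{2}}{n}\log\frac{n}{t^{2}}\right),
\]
where the last step absorbs the additive constant: for $\lambda \geq 1$ one has $\log\frac{n}{t^{2}} = \log(16\lambda) \geq \log 16 > 0$, so $\log\frac{n}{16t^{2}} + c_{0} = \log\lambda + c_{0} \leq (1 + c_{0}/\log 16)\log(16\lambda)$.

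As a cross-check, the same bound can be obtained by the more pedestrian route of splitting at $y_{0} = 1/\lambda$ (the case $\lambda < 9$ being trivial): on $(0, y_{0})$ bound $e^{-\lambda y} \leq 1$ and use $\int_{0}^{y_{0}}\log\frac{1}{y}\,\dd y = y_{0}\bigl(1 + \log\frac{1}{y_{0}}\bigr) = (1+\log\lambda)/\lambda$, while on $(y_{0}, 1/9)$ bound $\log\frac{1}{y} \leq \log\lambda$ and $\int_{y_{0}}^{\infty} e^{-\lambda y}\,\dd y \leq 1/\lambda$; summing the pieces again gives $O\bigl((1+\log\lambda)/\lambda\bigr)$. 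There is no real obstacle here --- the only things to watch are the bookkeeping of additive versus multiplicative constants and the standing assumption $n = \Omega(t^{2})$, which holds in every application of the lemma.
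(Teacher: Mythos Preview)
Your proof is correct and follows essentially the same approach as the paper: substitute $u=\lambda y$ with $\lambda=n/(16t^{2})$, split $\log(1/y)=\log\lambda+\log(1/u)$, extend the range to $(0,\infty)$, and invoke the finiteness of $\int_{0}^{\infty}e^{-u}\,\dd u$ and $\int_{0}^{\infty}e^{-u}\log(1/u)\,\dd u$. If anything, you are slightly more careful than the paper---you track the correct upper limit $\lambda/9$ after substitution, replace $\log(1/u)$ by $\abs{\log u}$ to keep the integrand nonnegative when enlarging the domain, and make explicit the standing assumption $n=\Omega(t^{2})$ needed for the final constant absorption.
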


\begin{proof}
Substituting $y$ with $z = \frac{ny}{16t^2}$, we have
\begin{align*}
I & =\int_{0}^{1/9} e^{-\frac{ny}{16t^2}}\log{\frac{1}{y}}\dd y\\
& = \frac{16t^2}{n} \int_{0}^{1/9}  e^{-z} \left(\log \frac{n}{16t^2} + \log\frac{1}{z} \right)\dd  z \\
& \leq  \frac{16t^2}{n}  \log \frac{n}{16t^2}  \int_{0}^{\infty}  e^{ -z } \dd z +\frac{16t^2}{n}  \int_{0}^{\infty} e^{-z} \log \frac{1}{z} \dd  z. 
\end{align*}
Since both $\int_{0}^{\infty} e^{-z}  dz $ and $  \int_{0}^{\infty} e^{-z} \log \frac{1}{z} \dd  z$ are constants, we conclude
\[\int_{0}^{1/9} e^{-\frac{ny}{16t^2}}\log\frac{1}{y} \dd y = O \left( \frac{t^2}{n} \log \frac{n}{t^2} \right).\qedhere\]

\end{proof}

\section{Lower Bound of Expected Size of Tukey Layers}
\label{section: low bound for convex shell} 
We shall prove the lower bound on the expected size of $U_{[t]}(X)$, the first $t$ Tukey layers, for two special cases where $X$ is sampled from a parallelogram (Section~\ref{subsec:parallelogram lower bound}) and a triangle (Section~\ref{subsec:triangle lower bound}). We need the following lemma throughout this section.

\begin{lemma}[{\cite[Section 3]{affentranger1991convex}}]
	\label{lem:integration}
	For all integer \(r,s \geq 0\) and for all \(c \in (0,1]\) we have
	\begin{align*}
	\int_{0}^{1} \int_{0}^{1}(1-cxy)^{n-s}(xy)^r\mathrm{d}x\mathrm{d}y = \frac{r!}{c^{r+1}} \cdot \frac{\log n}{n^{r+1}}+ O\left(\frac{1}{n^{r+1}}\right),\quad n\to\infty.
	\end{align*}
\end{lemma}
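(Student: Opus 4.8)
The plan is to collapse the double integral to a one–dimensional one, identify it as a parameter–derivative of an (incomplete) Beta integral, and then read off the asymptotics from the classical expansion of the digamma function. First I would substitute $u=xy$ for fixed $x$ and exchange the order of integration over the region $\{0<u<x<1\}$; this turns the left–hand side into $I_n(r):=\int_0^1(1-cu)^{n-s}u^r\log(1/u)\,\mathrm{d}u$. Since $\partial_r(u^r)=u^r\log u$ and the integrand together with this derivative is bounded on $[0,1]$ for $r$ in a bounded range, differentiation under the integral sign is legitimate, so $I_n(r)=-g'(r)$, where $g(r):=\int_0^1(1-cu)^{n-s}u^r\,\mathrm{d}u$ is regarded as a smooth function of the real variable $r$ and the derivative is evaluated at the given integer.

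Next, the substitution $w=cu$ gives $g(r)=c^{-(r+1)}(B(r+1,n-s+1)-\varepsilon_n(r))$, where $B$ is the Euler Beta function and $\varepsilon_n(r)=\int_c^1(1-w)^{n-s}w^r\,\mathrm{d}w$. On $[c,1]$ we have $(1-w)^{n-s}\le(1-c)^{n-s}$, so $\varepsilon_n(r)$ and $\partial_r\varepsilon_n(r)$ (the latter gaining only a bounded factor $|\log w|\le\log(1/c)$) are both $O((1-c)^{n-s})$, exponentially small in $n$, and they vanish identically when $c=1$. From $B(r+1,n-s+1)=\Gamma(r+1)\Gamma(n-s+1)/\Gamma(n-s+r+2)$ and $\psi:=(\log\Gamma)'$ one gets $\partial_r B(r+1,n-s+1)=B(r+1,n-s+1)(\psi(r+1)-\psi(n-s+r+2))$. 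The expansion $\psi(x)=\log x+O(1/x)$ gives $\psi(n-s+r+2)=\log n+O(1/n)$, while $\psi(r+1)=O(1)$; hence $I_n(r)=-g'(r)=c^{-(r+1)}B(r+1,n-s+1)(\log n+O(1))+O((1-c)^{n-s})$.

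Finally, since $B(r+1,n-s+1)=\frac{r!}{\prod_{i=1}^{r+1}(n-s+i)}=\frac{r!}{n^{r+1}}(1+O(1/n))$, everything collapses to $I_n(r)=\frac{r!}{c^{r+1}}\cdot\frac{\log n}{n^{r+1}}+O(n^{-r-1})$, the stray terms $O(n^{-r-2}\log n)$ and $O((1-c)^{n-s})$ being absorbed into $O(n^{-r-1})$. The point needing the most care is the error bookkeeping in the last two steps: one must invoke the sharp expansion $\psi(x)=\log x+O(1/x)$ (equivalently $H_m=\log m+\gamma+O(1/m)$ for harmonic numbers), not merely $\psi(n-s+r+2)\sim\log n$, so that the $O(1)$ correction, once multiplied by $B(r+1,n-s+1)\asymp n^{-r-1}$, genuinely remains $O(n^{-r-1})$. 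A route avoiding special functions would instead scale $u=z/n$, rewriting $I_n(r)=n^{-(r+1)}\int_0^n(1-cz/n)^{n-s}z^r(\log n-\log z)\,\mathrm{d}z$, then use $(1-cz/n)^{n-s}\le e^{-cz/2}$ for $n\ge 2s$ and dominated convergence: the coefficient of $\log n$ tends to $\int_0^\infty e^{-cz}z^r\,\mathrm{d}z=r!/c^{r+1}$ at rate $O(1/n)$, while the $\log z$–piece stays bounded by $\int_0^\infty e^{-cz/2}z^r|\log z|\,\mathrm{d}z<\infty$.
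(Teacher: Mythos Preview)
The paper does not give its own proof of this lemma; it is quoted verbatim from \cite[Section 3]{affentranger1991convex} and used as a black box in Section~\ref{section: low bound for convex shell}. So there is nothing in the paper to compare your argument against.

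Your argument itself is correct. The reduction to the one--dimensional integral $I_n(r)=\int_0^1(1-cu)^{n-s}u^r\log(1/u)\,\mathrm{d}u$ via Fubini is standard, and both routes you sketch (Beta/digamma and Laplace scaling $u=z/n$) work. Two small points of bookkeeping worth tightening: (i) when you differentiate $g(r)=c^{-(r+1)}(B(r+1,n-s+1)-\varepsilon_n(r))$ in $r$, the factor $c^{-(r+1)}$ also contributes a term $-c^{-(r+1)}\log c\cdot(B-\varepsilon_n)$, which you have silently absorbed into the $O(1)$ inside $\log n+O(1)$---correct, but worth stating; (ii) the phrase ``the integrand together with this derivative is bounded on $[0,1]$'' is not literally true at $r=0$ (where $u^0\log u$ is unbounded near $0$), so the justification for differentiating under the integral should appeal to an integrable dominating function such as $u^{-1/2}|\log u|$ rather than boundedness. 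Neither point affects the validity of the final asymptotic.
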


\subsection{Parallelogram}
\label{subsec:parallelogram lower bound}
Without loss of generality, we may assume that the parallelogram is a unit square $[0,1]\times [0,1]$, because the combinatorial properties would not change under an affine transformation. For each point $p=(p_1, p_2)\in X$, we now compute the probability that it is on the first $t$ Tukey layers of $X$. For this purpose, we introduce the following definition.

\begin{definition}
Given a point $p=(p_1, p_2)$ with $0\leq p_1<\frac{1}{2}$ and $0\leq p_2<\frac{1}{2}$, the dividing line is defined to be
\[\ell_{0}:\frac{x}{2p_1}+\frac{y}{2p_2}=1.\]
The dividing line when $p_1\geq\frac{1}{2}$ or $p_2\geq\frac{1}{2}$ can be defined symmetrically.
\end{definition}
The line divides the unit square into  a triangle of area $2p_1p_2$ and a pentagon of area $(1-2p_1p_2)$. Notice that a sufficient condition for a point $p$ to be on the first $t$ Tukey layers is that, there are no more than $(t-1)$ points in the triangular part. We thus have the following theorem.

\begin{theorem}
\label{them:rectangle lower bound}
Suppose that $X$ consists of $n$ independent and uniformly sampled points from a unit square. There exists an absolute constant $\alpha > 0$ such that whenever $t\leq\alpha\sqrt{n}$, it holds that $\E\left|U_{[t]}(X)\right| = \Omega(t\log n)$ as $n\to\infty$. Furthermore, when $t = o((n/\log n)^{1/3})$, it holds that $\E\left|U_{[t]}(X)\right|\geq 2t\log n + O(1)$ as $n\to\infty$.
\end{theorem}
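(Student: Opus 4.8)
The plan is to estimate from below the probability that a fixed point $p=(p_1,p_2)\in X$ lies on the first $t$ Tukey layers, using the sufficient condition stated just before the theorem: if the triangular part cut off by the dividing line $\ell_0$ contains at most $(t-1)$ of the other $n-1$ points, then $p\in U_{[t]}(X)$. Conditioned on $p$ having coordinates $(p_1,p_2)$ with $0\le p_1,p_2<\tfrac12$, the triangle has area $2p_1p_2$, so the number of the remaining $n-1$ points inside it is $\mathrm{Bin}(n-1,2p_1p_2)$, and
\[
\Pr\bigl(p\in U_{[t]}(X)\mid p=(p_1,p_2)\bigr)\ \ge\ \sum_{i=0}^{t-1}\binom{n-1}{i}(2p_1p_2)^i(1-2p_1p_2)^{n-1-i}.
\]
First I would integrate this over the region $\{0\le p_1,p_2<\tfrac12\}$ against the uniform density, and then (by symmetry over the four corner regions of the square) multiply by the appropriate factor and by $n$ to get a lower bound on $\E|U_{[t]}(X)|$. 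Substituting $x=2p_1$, $y=2p_2$ turns each term into an integral of the shape $\int_0^1\int_0^1 (xy)^i(1-cxy)^{n-1-i}\,dx\,dy$ with $c=\tfrac12$, which is exactly the form handled by Lemma~\ref{lem:integration}: it evaluates to $\dfrac{i!}{c^{i+1}}\cdot\dfrac{\log n}{n^{i+1}}+O(n^{-(i+1)})$.

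The key computation is then to sum these contributions. For the $i$-th term, $\binom{n-1}{i}\sim n^i/i!$, the prefactor $(2p_1p_2)^i=(xy)^i$ combines with it, and the Lemma gives a factor $\tfrac{i!}{c^{i+1}}\cdot\tfrac{\log n}{n^{i+1}}$; the binomial coefficients from $(1-2p_1p_2)^{n-1-i}$ expanded against $(1-cxy)^{n-s}$ require choosing $s$ suitably (here $s=i+1$), which Lemma~\ref{lem:integration} permits for each fixed $i$. Collecting, the $i$-th term of $n\cdot\Pr(\cdot)$ contributes asymptotically $\dfrac{1}{c}\log n = 2\log n$ (the constant area normalisations cancel), so summing $i=0$ to $t-1$ and accounting for the four symmetric corners — being careful not to double count overlap, which only costs lower-order terms — gives $\E|U_{[t]}(X)|\ge 2t\log n + O(1)$, and in particular $\Omega(t\log n)$. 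The restriction $t=o((n/\log n)^{1/3})$ is what is needed so that the $O(n^{-(i+1)})$ error terms, after multiplication by $\binom{n-1}{i}$ and summation over $i<t$, remain $o(\log n)$ per layer (a term-by-term error is $O(1/n)\cdot n = O(1)$ but the number of terms and the combinatorial factors must be controlled); for the weaker conclusion $\Omega(t\log n)$ one only needs $t\le\alpha\sqrt n$, which suffices to keep $2p_1p_2$ bounded away from scales where the Poisson-type approximation degrades and to ensure each term stays positive and of the right order.

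The main obstacle I anticipate is the uniform control of the error terms across all $t$ layers simultaneously: Lemma~\ref{lem:integration} is an $n\to\infty$ asymptotic for \emph{fixed} $r,s$, so one must track how the implied constants depend on $i$ (equivalently $r=i$, $s=i+1$) and verify that $\sum_{i=0}^{t-1}\binom{n-1}{i}\cdot O(n^{-(i+1)})$ is genuinely $o(t\log n)$, which is where the hypotheses $t\le\alpha\sqrt n$ and $t=o((n/\log n)^{1/3})$ enter. A secondary technical point is handling the boundary cases $p_1\ge\tfrac12$ or $p_2\ge\tfrac12$ via the symmetric definition of the dividing line and making sure the four regions together still yield a clean lower bound rather than forcing us to intersect events; since we only need a lower bound, restricting attention to, say, the corner $\{p_1,p_2<\tfrac12\}$ alone already produces $\Omega(t\log n)$, and the factor matching $2t\log n$ comes from summing the four corners with the overlap absorbed into $O(1)$.
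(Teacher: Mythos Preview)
Your proposal is correct and follows essentially the same route as the paper: the same sufficient condition via the dividing line, the same binomial sum integrated over $\{0\le p_1,p_2<\tfrac12\}$, the same substitution $x=2p_1,\,y=2p_2$ reducing to Lemma~\ref{lem:integration} with $c=\tfrac12$, and the same term-by-term estimate $\binom{n-1}{i}\,i!\,n^{-i-1}\approx n^{-1}$ yielding $2t\log n$ after summation and multiplication by $n$. One small remark: your worry about ``double count overlap'' among the four corners is unnecessary---the four regions $\{p_1\lessgtr\tfrac12\}\times\{p_2\lessgtr\tfrac12\}$ partition the square up to a null set, so the factor $4$ is exact, not an upper bound requiring correction.
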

\begin{proof}
\begin{align*}
\Pr(p\in U_{[t]})&\geq  \Pr(\mbox{no more than } t \mbox{ points under the dividing line } \ell_{0})
\\
&=  4\int_{0}^{\frac{1}{2}}\int_{0}^{\frac{1}{2}}\sum_{i=0}^{t-1}\binom{n-1}{i}(2p_1p_2)^{i}(1-2p_1p_2)^{n-1-i}\dd p_1\dd p_2\\
&=  4\sum_{i=0}^{t-1}\binom{n-1}{i}\int_{0}^{\frac{1}{2}}\int_{0}^{\frac{1}{2}}(2p_1p_2)^{i}(1-2p_1p_2)^{n-1-i}\dd p_1\dd p_2\\
&= \sum_{i=0}^{t-1}\binom{n-1}{i}\int_{0}^{\frac{1}{2}}\int_{0}^{\frac{1}{2}}(2p_1p_2)^{i}(1-2p_1p_2)^{n-1-i}\dd (2p_1)\dd (2p_2)\\
&=  \sum_{i=0}^{t-1}\binom{n-1}{i}\int_{0}^{1}\int_{0}^{1}\left(\frac{p_1p_2}{2}\right)^{i}\left(1-\frac{p_1p_2}{2}\right)^{n-1-i}\dd p_1\dd p_2\\
&=  \sum_{i=0}^{t-1}\frac{1}{2^{i}}\binom{n-1}{i}\int_{0}^{1}\int_{0}^{1}(p_1p_2)^{i}\left(1-\frac{1}{2}p_1p_2\right)^{n-1-i}\dd p_1\dd p_2.
\end{align*}
By Lemma~\ref{lem:integration}, when $n\to\infty$, we have
\begin{align*}
\int_{0}^{1}\int_{0}^{1}(p_1p_2)^{i}\left(1-\frac{1}{2}p_1p_2\right)^{n-1-i}\dd p_1\dd p_2=\frac{i!}{\left(\frac{1}{2}\right)^{i+1}}\frac{\log{n}}{n^{i+1}}+O\left(\frac{1}{n^{i+1}}\right).
\end{align*}
Therefore, as $n\to\infty$,
\begin{align*}
\Pr(p\in U_{[t]}) &\geq \sum_{i=0}^{t-1}\frac{1}{2^{i}}\binom{n-1}{i}\left[\frac{i!}{\left(\frac{1}{2}\right)^{i+1}}\frac{\log{n}}{n^{i+1}}+O\left(\frac{1}{n^{i+1}}\right)\right]\\
&= \sum_{i=0}^{t-1}\left[2\cdot\frac{(n-1)!}{(n-1-i)!\cdot n^{i}}\cdot\frac{\log{n}}{n}+O\left(\frac{1}{2^{i}i!n}\right)\right]\\
&= \sum_{i=0}^{t-1}\left[\frac{2\log{n}}{n}\cdot\left(1-\frac{1}{n}\right)\cdot\left(1-\frac{2}{n}\right)\dots \cdot\left(1-\frac{i}{n}\right)+O\left(\frac{1}{2^{i}i!n}\right)\right]\\
&\geq \sum_{i=0}^{t-1}\frac{2\log{n}}{n}\cdot\left(1-\frac{(i+1)i}{2n}\right)+O\left(\frac{1}{n}\right)\\
&\geq \sum_{i=0}^{t-1}\frac{2\log{n}}{n}\cdot\left(1-\frac{(t-1)t}{2n}\right)+O\left(\frac{1}{n}\right)\\
&\geq\frac{2t\log{n}}{n}\left(1-\frac{t^2}{2n}\right)+O\left(\frac{1}{n}\right).
\end{align*}
Finally, the expected number of points on the first $t$ Tukey layers
\[
\mathbb{E}\left|U_{[t]}\right|=\sum_{p\in X}\Pr(p\in U_{[t]})\geq 2\left(1-\frac{t^2}{2n}\right)t\log{n} + O(1).
\]
The conclusions follow immediately.
\end{proof}

\subsection{Triangle}
\label{subsec:triangle lower bound}

\begin{theorem}
\label{them:triangle lower bound}
Suppose that $X$ consists of $n$ independent and uniformly sampled points from a triangle. There exists an absolute constant $\alpha > 0$ such that whenever $t\leq\alpha\sqrt{n}$, it holds that $\E\left|U_{[t]}(X)\right| = \Omega(t\log n)$ as $n\to\infty$.
\end{theorem}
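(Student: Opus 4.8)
The plan is to mirror the argument for the parallelogram in Theorem~\ref{them:rectangle lower bound}, adapting the ``dividing line'' construction to a corner of the triangle. Since the combinatorial structure of the Tukey layers is affine-invariant, I would fix $T$ to be the triangle with vertices $(0,0)$, $(1,0)$, $(0,1)$, so that the uniform density on $T$ equals $2$. I would then restrict attention to points $p=(p_1,p_2)$ lying in the sub-square $[0,\tfrac12]\times[0,\tfrac12]$, which is entirely contained in $T$. For such a $p$, consider the line $\ell_0\colon \tfrac{x}{2p_1}+\tfrac{y}{2p_2}=1$. It passes through $p$ and cuts off from $T$ exactly the triangle with vertices $(0,0)$, $(2p_1,0)$, $(0,2p_2)$; this triangle lies inside $T$ precisely because $p_1,p_2\le\tfrac12$, it has area $2p_1p_2$, hence probability mass $4p_1p_2$. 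Exactly as in the parallelogram case, if at most $t-1$ of the remaining $n-1$ points fall strictly below $\ell_0$, then $p$ has Tukey depth at most $t$ and so $p\in U_{[t]}(X)$; degenerate configurations contribute nothing.

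Next I would condition on $p$ landing in the corner square and write
\[
\Pr\bigl(p\in U_{[t]}(X)\bigr)\;\ge\;2\int_{0}^{1/2}\!\!\int_{0}^{1/2}\sum_{i=0}^{t-1}\binom{n-1}{i}(4p_1p_2)^{i}(1-4p_1p_2)^{n-1-i}\dd p_1\dd p_2.
\]
Substituting $u=2p_1$, $v=2p_2$ brings the right-hand side into the clean form $\tfrac12\sum_{i=0}^{t-1}\binom{n-1}{i}\int_{0}^{1}\!\int_{0}^{1}(uv)^{i}(1-uv)^{n-1-i}\dd u\dd v$. Applying Lemma~\ref{lem:integration} with $c=1$, $r=i$, $s=i+1$ evaluates each double integral as $i!\tfrac{\log n}{n^{i+1}}+O(n^{-(i+1)})$, and using $\binom{n-1}{i}i!=(n-1)(n-2)\cdots(n-i)$ the main term of the $i$-th summand becomes $\tfrac{\log n}{n}\prod_{j=1}^{i}(1-\tfrac jn)$, just as in the parallelogram proof.

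Finally I would lower-bound $\prod_{j=1}^{i}(1-\tfrac jn)\ge 1-\tfrac{i(i+1)}{2n}\ge 1-\tfrac{t^2}{2n}\ge\tfrac12$ for every $i\le t-1$ whenever $t\le\sqrt n$ (so one may take $\alpha=1$, or indeed any $\alpha<\sqrt2$), while the collected error terms sum to $\sum_{i\ge0}O(\tfrac{1}{i!n})=O(\tfrac1n)$. This yields $\Pr(p\in U_{[t]}(X))\ge \tfrac{t\log n}{4n}+O(\tfrac1n)$, and summing over the $n$ points gives $\E|U_{[t]}(X)|\ge\tfrac{t}{4}\log n+O(1)=\Omega(t\log n)$.

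The computation itself is routine once the geometry is set up, so the only genuine (and minor) obstacles are: (i) ensuring that the region cut off by $\ell_0$ is a true sub-triangle of $T$ with the stated area and mass — which is the reason for confining $p$ to $[0,\tfrac12]^2$; and (ii) checking that the $O(\cdot)$ in Lemma~\ref{lem:integration}, although stated for fixed parameters, can be controlled uniformly enough over $i\in\{0,\dots,t-1\}$ for the error sum above to remain $O(1/n)$ after multiplication by $n$ — the same point that is implicitly used in the parallelogram argument.
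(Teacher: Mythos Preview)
Your proposal is correct and follows essentially the same route as the paper: fix the right triangle, restrict $p$ to the corner square $[0,\tfrac12]^2$, use the dividing line $\ell_0$ to cut off a sub-triangle of mass $4p_1p_2$, bound $\Pr(p\in U_{[t]})$ below by the probability that at most $t-1$ of the other points fall in that sub-triangle, substitute to reduce to $\int_0^1\int_0^1 (uv)^i(1-uv)^{n-1-i}\,du\,dv$, apply Lemma~\ref{lem:integration}, and sum. You are in fact slightly more careful than the paper in carrying the uniform density factor~$2$, which gives you a leading constant twice as large; both arguments yield the claimed $\Omega(t\log n)$, and your explicit remark that $\alpha=1$ works (with caveat~(ii) on the uniformity of the error in Lemma~\ref{lem:integration}) matches what the paper uses implicitly.
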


\begin{proof}
The proof is similar to that of Theorem~\ref{them:rectangle lower bound}. Without loss of generality, we assume that the vertices of the triangle are $(0, 0)$, $(0, 1)$ and $(1, 0)$. Here we only consider those $p$ where $0\leq p_1\leq \frac{1}{2}$ and $0\leq p_2\leq\frac{1}{2}$. We now find a lower bound on $\Pr\left(p\in U_{[t]}, 0\leq p_1\leq\frac{1}{2}, 0\leq p_2\leq\frac{1}{2}\right)$. Note that dividing line divide the triangle into a triangle of area $2p_1p_2$ and a 
quadrilateral of area $\frac{1}{2}-2p_1p_2$. Their probability masses are $4p_1p_2$ and $1-4p_1p_2$ respectively.

\begin{align*}
\Pr(p\in U_{[t]}) &\geq \Pr\left(p\in U_{[t]}, 0\leq p_1\leq\frac{1}{2}, 0\leq p_2\leq\frac{1}{2}\right)\\
& \geq \Pr(\mbox{no more than}\ t\ \mbox{points under the dividing line}\ \ell_0)\\
&=\int_{0}^{\frac{1}{2}}\int_{0}^{\frac{1}{2}}\sum_{i=0}^{t-1}\binom{n-1}{i}(4p_1p_2)^{i}(1-4p_1p_2)^{n-1-i}\dd p_1\dd p_2\\
&=\frac{1}{4}\cdot\int_{0}^{\frac{1}{2}}\int_{0}^{\frac{1}{2}}\sum_{i=0}^{t-1}\binom{n-1}{i}(2p_1\cdot 2p_2)^{i}(1-2p_1\cdot 2p_2)^{n-1-i}\dd (2p_1)\dd (2p_2)\\
&=\frac{1}{4}\cdot\sum_{i=0}^{t-1}\binom{n-1}{i}\int_{0}^{1}\int_{0}^{1}(p_1 p_2)^{i}(1-p_1 p_2)^{n-1-i}\dd p_1\dd p_2
\end{align*}
By Lemma~\ref{lem:integration}, as $n\rightarrow\infty$,
\[\int_{0}^{1}\int_{0}^{1}(p_1 p_2)^{i}(1-p_1 p_2)^{n-1-i}\dd p_1\dd p_2=\frac{i!\log{n}}{n^{i+1}}+O\left(\frac{1}{n^{i+1}}\right).\]
Therefore
\begin{align*}
\Pr(p\in U_{[t]}) & \geq \frac{1}{4}\cdot\sum_{i=0}^{t-1}\left[\frac{(n-1)!}{n^i(n-i-1)!}\frac{\log{n}}{n} + O\left(\frac{1}{i!n}\right)\right]\\
&=\frac{1}{4}\cdot\sum_{i=0}^{t-1}\left[\left(1-\frac{1}{n}\right)\cdot\left(1-\frac{2}{n}\right)\cdots\left(1-\frac{i}{n}\right)\cdot\frac{\log{n}}{n}\cdot + O\left(\frac{1}{i! n}\right)\right]\\
&\geq \frac{1}{4}\cdot\sum_{i=0}^{t-1} \left[\left(1-\frac{(i+1)i}{2n}\right)\cdot\frac{\log{n}}{n} + O\left(\frac{1}{i! n}\right)\right]\\
&\geq \frac{1}{4}\cdot\sum_{i=0}^{t-1}\left[\left(1-\frac{(t-1)t}{2n}\right)\cdot\frac{\log{n}}{n} + O\left(\frac{1}{i!n}\right)\right]\\
&\geq\frac{1}{4}\cdot\frac{t\log{n}}{n}\cdot\left(1-\frac{t^2}{2n}\right) + O\left(\frac{1}{n}\right).\end{align*}
Finally, the expected number of points on the first $t$ Tukey layers
\[
\mathbb{E}\left|U_{[t]}\right|=\sum_{p\in X}\Pr(p\in U_{[t]})\geq \frac{1}{4}\left(1-\frac{t^2}{2n}\right)t\log{n} + O(1).
\]
The conclusions follow immediately.
\end{proof}

\section{Applications}
\label{section: application}

In this section, we discuss how our results in Sections~\ref{section: upper bound for Tukey layer} and~\ref{section: upper bound for convex layer} help in the average case analysis of two partial enclosing problems. The objective is to enclose $(n-t)$ of the given $n$ points in $X$ by a specified shape such that the area of the shape is minimized. This kind of problem is known as partial shape fitting and is an important problem in computational geometry, see, e.g., \cite{atanassov2009algorithms,das2005smallest,har2004shape,segal1998enclosing}. The points that are not enclosed are referred to as outliers \cite{atanassov2009algorithms,har2004shape}.

The average case complexity is another important measure in addition to the worst case complexity. As pointed out in \cite{dwyer1988average}, the average case analysis is desirable because the best-case
and worst-case performance of an algorithm usually differs greatly, especially for output-sensitive algorithms. In such situation, the average case complexity seems to be a more accurate and fair measurement of an algorithm's performance. A common scenario is that the input point set is drawn from some probability distribution and it is widely adopted by the computational geometry community to consider the uniform distribution in a convex polygon  \cite{affentranger1991convex,barany1999sylvester,buchta2012boundary,dwyer1988convex,efron1965convex,har2011expected}.


\subsection{Enclosing Parallelogram with Minimum Area}
The algorithm given in  \cite{guo2020minimum} studies how to find a parallelogram with the minimum-area that encloses $(n-t)$ of the $n$ given points. The time complexity of the algorithm is $O\left(t^3\tau^2+n^2\log{n}\right)$, where $\tau$ is the number of points whose Tukey depth is at most $(t+1)$. Such points coincide with $U_{[t+1]}(X)$ and so $\tau=\left|U_{[t+1]}(X)\right|$. In the worst case, $\left|U_{[t+1]}(X)\right|= n$ can be true and the worst case time complexity is thus $O\left(n^2t^3+n^2\log{n}\right)$. However, on average, we have
\begin{align*}
	&\quad\, \E\left[O\left(\left|U_{[t+1]}(X)\right|^2t^3+\left|U_{[t+1]}(X)\right|tn+n^2\log{n}\right)\right]\\
	&\leq \E\left[O\left(nt^3\left|U_{[t]}(X)\right|+nt\left|U_{[t]}(X)\right|+n^2\log{n}\right)\right]\\
	&=  O\left(kt^4n\log{\frac{n}{k}}+n^2\log{n}\right),
\end{align*}
when $X$ is uniformly sampled from a $k$-gon. When $t$ is between $\Omega\left(\log^{\frac{1}{3}}{n}\right)$ and $ O\left(\frac{n}{k\log{\frac{n}{k}}}\right)$, the average case complexity is smaller than the worst-case complexity. This explains why in many cases the actual runtime of the algorithm is faster than the worst-case complexity.

\subsection{Minimum Enclosing Convex Hull}
\label{subsec:Minimum Enclosing Convex Hull}
Another application of our result is the algorithm for the minimum enclosing convex hull. Let $X$ be a set of $n$ points in $\R^2$. The problem asks to find a subset $X' \subset X$, $|X'| =t$, such that area  of $H_t(X \setminus X')$ is minimized.  In \cite{atanassov2009algorithms}, Atanassov et al.\ provide an elegant solution to this problem with running time $O \left(    n \log n  + \binom{4t}{2t} (3t)^t |H_{[t]} (X)|   \right)$. In the worst case, $|H_{[t]}(X)| = n$, which happens when $X$ has  at most $t$ layers. For the average case, Theorem~\ref{thm:upper bound 2} implies a time complexity of $O\left(  n \log n + k \binom{4t}{2t} (3t)^{t} t^3 \log \frac{n}{kt^2} \right)$, when $X$ is uniformly
distributed in convex $k$-gon. The average case is substantially better than the worst case when $t = O\left(\left(\frac{n}{k\log(n/k)})\right)^{1/3} \right)$.

\section{Closing Remarks}
In this paper, we studied the expected size of the random convex layers and random Tukey layers of a point set $X$ consisting of $n$ points drawn independently and uniformly from a convex $k$-gon.

For random Tukey layers, we showed that $\E|U_{[t]}(X)| = O\left(kt \log(n/k)\right)$ but only showed a matching lower bound of $\Omega(t\log n)$ for triangles and parallelograms. We leave an open problem of obtaining a general lower bound of $\Omega(kt\log n)$, for which a straightforward extension of our current technique of considering a line passing through a single point $p$ in Section~\ref{section: low bound for convex shell} seems inadequate. We also leave an open problem of obtaining a tight constant in the asymptotic results (which could depend on $t$); our constants are $4$ in the upper bound and $2$ in the lower bound, which are not tight since the tight constant is known to be $8/3$ when $t=1$~\cite{RS63}.

For random convex layers, we showed that $\E|V_{[t]}(X)| = O(kt^3\log(n/(kt^2)))$. However, when the points are from sampled from a square, a better upper bound of $O(t^2\log(n/t^2))$ is known~\cite{he2018maximal}. Thus, a natural question is whether it holds $\E|V_{[t]}(X)| = O(kt^2\log(n/(kt^2)))$ in general. Another interesting open problem is to obtain a lower bound with dependence on $t$, as existing lower bounds are only for $t=1$ and there seem substantial difficulties to extend the existing techniques to a larger $t$.

\bibliographystyle{elsarticle-num}
\bibliography{reference}

\end{document}